\newcommand{\BD}{\sigma}
\newcommand{\NP}{\ensuremath{\mathsf{NP}}\xspace}
\newcommand{\p}{\ensuremath{\mathsf{P}}\xspace}
\title{Graph Spanners for Group Steiner Distances}
\author{Davide Bilò}{Department of Information Engineering, Computer Science and Mathematics,\\ University of L'Aquila, Italy}{davide.bilo@univaq.it}{https://orcid.org/0000-0003-3169-4300}{}
\author{Luciano Gualà}{Department of Enterprise Engineering, University of Rome ``Tor Vergata'', Italy}{guala@mat.uniroma2.it}{https://orcid.org/0000-0001-6976-5579}{}
\author{Stefano Leucci}{Department of Information Engineering, Computer Science and Mathematics,\\ University of L'Aquila, Italy}{stefano.leucci@univaq.it}{https://orcid.org/0000-0002-8848-7006}{}
\author{Alessandro Straziota}{Department of Enterprise Engineering, University of Rome ``Tor Vergata'', Italy}{alessandro.straziota@uniroma2.it}{https://orcid.org/0009-0008-4543-786X}{}
\authorrunning{D. Bilò, L. Gualà, S. Leucci, and A. Straziota} %TODO mandatory. First: Use abbreviated first/middle names. Second (only in severe cases): Use first author plus 'et al.'
\keywords{Network sparsification, Graph spanners, Group Steiner tree, Distance oracles}
\begin{document}

\maketitle

\begin{abstract}
A spanner is a sparse subgraph of a given graph $G$ which preserves distances, measured w.r.t.\ some distance metric, up to a multiplicative stretch factor.
This paper addresses the problem of constructing graph spanners w.r.t.\ the \emph{group Steiner metric}, which generalizes the recently introduced beer distance metric.
In such a metric we are given a collection of \emph{groups} of \emph{required} vertices, and we measure the distance between two vertices as the length of the shortest path between them  that traverses at least one required vertex from each group.

We discuss the relation between \emph{group Steiner spanners} and classic spanners and we show that they exhibit strong ties with \emph{sourcewise} spanners w.r.t.\ the shortest path metric.
Nevertheless, group Steiner spanners capture several interesting scenarios that are not encompassed by existing spanners. This happens, e.g., for the \emph{singleton case}, in which each group consists of a single required vertex, thus modeling the setting in which routes need to traverse certain points of interests (in any order).

We provide several constructions of group Steiner spanners for both the \emph{all-pairs} and \emph{single-source} case, which exhibit various size-stretch trade-offs. 
Notably, we provide spanners with almost-optimal trade-offs for the \emph{singleton} case.
Moreover, some of our spanners also yield novel trade-offs for classical sourcewise spanners.

Finally, we also investigate the query times that can be achieved when our spanners are turned into group Steiner distance oracles with the same size, stretch, and building time.
\end{abstract}

\section{Introduction}

Given an edge-weighted graph $G=(V,E)$, the distance between two vertices $s,t$ is typically measured
as the length of a shortest path having $s$ and $t$ as endvertices. Such \emph{shortest path metric} is pervasive in the study of optimization problems on graphs, yet there are natural scenarios that cannot be readily captured by such a metric. For example, consider the case in which a route from $s$ to $t$ needs to pass through at least one vertex from a distinguished set $R \subseteq V$ of \emph{required} vertices. These vertices might represent, e.g., grocery stores on your commute to work, charging stations when planning a trip in an electric vehicle, or special hosts when routing packets in a communication network.
The above metric was introduced in \cite{Bacic0S21} under the name \emph{beer-distance}.\footnote{In \cite{Bacic0S21}, the required nodes in $R$ correspond to breweries, and one seeks a shortest path from $s$ to $t$ among those that traverse at least one brewery.}
In particular, several works have been devoted to the problem of building a compact data structure that is able to quickly report (exact or approximate) beer distances between pair of vertices for special classes of graphs, such as outer planar graphs, interval graphs, or bounded tree-width graphs \cite{Bacic0S21,BacicMS23,Das0KMNW22,GudmundssonS23,HanakaOSS23}.
Such data structures are the beer distance rendition of \emph{distance oracles}, which are analogous data structures for the shortest path metric. 
Distance oracles and the closely related concept of \emph{graph spanners} have received a vast amount of attention in the area of graph sparsification. Informally, an $\alpha$-spanner of $G$ is a sparse subgraph $H$ of $G$ that preserves the all-pairs distances in $G$ up to a multiplicative \emph{stretch} factor of $\alpha$, and a distance oracle can be seen as data structure which allows for quick queries on the underlying spanner.\footnote{Actually, a distance oracle does not immediately imply the existence of a corresponding spanner, while a sparse spanner can always be thought as a compact distance oracle albeit with a large query time. Nevertheless, it is often the case that a spanner and its corresponding oracle are provided together. The main challenge in designing distance oracles lies in organizing the implicit distance information contained in the spanner in a way that allows for quick queries.}

\noindent The above discussion begs the following two natural questions:
\begin{itemize}
    \item What happens if paths are required to traverse more than one kind of required vertices? For example, in the commute from home to work one needs to visit both a grocery store \emph{and} a gas station, in some order. 
    \item What can be said about beer-distances for \emph{general} graph, i.e., when $G$ does not fall into one of the special classes of graphs mentioned above? 
\end{itemize}

Answering these questions is the focus of our paper, which will be devoted to designing spanners and distance oracles for general graphs and for the natural generalization of beer distance, which we name \emph{group Steiner distance}. 

Formally, given an undirected connected graph $G=(V, E)$ on $n$ vertices and with non-negative edge weights, and a collection of $k \ge 1$ (not necessarily disjoint) \emph{groups} $R_1, \dots, R_k \subseteq V$ of \emph{required} vertices, a \emph{group Steiner path} between two vertices $s$ and $t$ is a (not necessarily simple) path $\pi$ between $s$ and $t$ in $G$ such that $\pi$ includes at least one vertex from each group $R_i$. The \emph{group Steiner distance} between $s$ and $t$ is the length (w.r.t.\ the edge weights, with multiplicity) of the shortest group Steiner path between $s$ and $t$ (see \Cref{fig:example-group-steiner-path}).

Not surprisingly, as we discuss in~\Cref{app:complexity_gsp}, the problem of computing the group Steiner distance between two vertices is \NP-hard in general for large values of $k$.  On the other hand, the group Steiner distance between two vertices can be computed in \emph{Fixed Parameter Tractable (FPT)} time $2^k k n^{O(1)}$, which is polynomial when $k=O(\log n)$. Notice that it is easy to imagine scenarios in which $k$ is a small constant. 

The group Steiner distance coincides with the beer distance for $k=1$, but it also captures practical scenarios in which one wants to route entities through multiple points of interest, as in waypoint routing and other related motion planning problems \cite{ElbassioniFMS05,POP2024819,AmiriFS20}.

Using this novel metric, we can define a \emph{group Steiner $\alpha$-spanner} which is the analogous of $\alpha$-spanner when distances are measured w.r.t.\ the group Steiner distance metric. 
Group Steiner spanners exhibit interesting relations with classical graph spanners (for the shortest path metric). Indeed, one can observe that any shortest group Steiner path can be seen as the concatenation of up to $k+1$ sub-paths, each of which has $s$, $t$, or a vertex in $R = \cup_i R_i$ as endvertices, and is a shortest path in $G$ (see \Cref{sec:notation} for the details). Hence any \emph{sourcewise}\footnote{A sourcewise $R \times V$ $\alpha$-spanner of $G$ is a subgraph of $G$ that approximates all the $R \times V$ distances within a multiplicative factor of $\alpha$.} $R \times V$ $\alpha$-spanner for $G$ is also a group Steiner spanner with the same stretch factor. As a consequence, all the upper bounds on the sizes for classical sourcewise (and all-pairs) $\alpha$-spanners carry over to group Steiner spanners. 
It turns out that these two notions coincide when $R_1 = \dots = R_k=R$, which implies that, in general, it is not possible to obtain better size-stretch trade-offs than the ones for sourcewise spanners.
This is always the case for $k=1$.

\begin{figure}
    \centering
    \includegraphics[width=.4\linewidth]{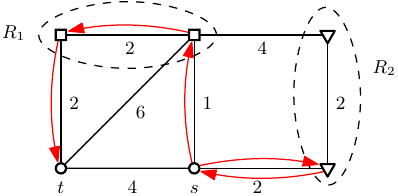}
    \caption{A shortest group Steiner path from $s$ to $t$ with length $9$. The required vertices of the two groups $R_1$ and $R_2$ are depicted as squares and triangles, respectively.}
    \label{fig:example-group-steiner-path}
\end{figure}

\subsection{Our results}

We mainly focus on group Steiner spanners and we show that their landscape exhibits a quite rich structure. In fact, while the problem of designing group Steiner spanners generalizes the problem of computing sourcewise spanners w.r.t.\ the shortest path metric, there are several interesting and natural classes of instances, like going from a source to a destination by passing through $k$ waypoints (in any order), for which the lower bounds for the sourcewise spanners do not apply. 

We start our investigation by considering exactly this scenario, that we call \emph{singleton case} because each group contains only one vertex. First, we pinpoint the extremal size-stretch trade-offs: on the one hand it is possible to build a group Steiner spanner with $O(kn)$ edges that preserves \emph{exact distances}; on the other hand $n-1$ edges are already sufficient to build a group Steiner \emph{tree spanner} with stretch $2$.\footnote{We observe that, in general, no sourcewise spanner w.r.t.\ the shortest path metric with stretch 2 and size $n-1$ exists. As an example, consider a complete unweighted $R\times (V\setminus R)$ bipartite graph and observe that none of the edges can be discarded if we are seeking for a sourcewise $R\times V$ $\alpha$-spanner, for any $\alpha <3$.} Both spanners can be constructed in polynomial-time. Moreover, we show that both results are tight, meaning that there are instances in which any group Steiner spanner preserving exact distances must contain $\Omega(kn)$ edges and instances for which any group Steiner spanner with stretch strictly less than $2$ must contain at least $n$ edges. 
We then consider intermediate stretch factors and show that $O(n/\varepsilon^2)$ edges suffice to build a group Steiner spanner having stretch  $\gamma + \varepsilon$ in polynomial time, where $\gamma$ is the approximation factor of a polynomial-time algorithm for the minimum-cost metric Hamiltonian path problem\footnote{See \Cref{app:complexity_gsp} for the formal definition of this problem.} and $\varepsilon \in (0,1)$. If one is willing to settle for an FPT building time w.r.t.\ $k$, then the above stretch can be improved to $1+\varepsilon$. These results are summarized in \Cref{tab:result1}~(a).

For general instances, we provide two \emph{recipes}, one of which can be thought as a generalization of Theorem 3 in \cite{ElkinFN17}. These recipes use existing $\alpha$-spanners w.r.t.\ the shortest-path metric to construct group Steiner spanners with stretch $2\alpha+1$ (see \Cref{tab:result1}~(b)). One recipe (\Cref{thm:all_pairs_general_2alpha_plus_1_spanner_bis}) upper bounds the size of the group Steiner spanner in terms of $n$ and $|R|$, while the other (\Cref{thm:all_pairs_general_2alpha_plus_1_spanner}) provides bounds w.r.t.\ the sizes of the groups $R_i$. Despite the simplicity of our constructions, by combining \Cref{thm:all_pairs_general_2alpha_plus_1_spanner_bis} with the spanners in \cite{AhmedBHKS21,BodwinWilliams,CyganGK13,ElkinGN23}, we obtain new trade-offs for sourcewise $R\times V$ spanners w.r.t.\ the shortest path metric. These results are marked with \ding{105} in \Cref{table:all_pairs_results_expanded}, which summarizes the current state of the art for group Steiner spanners.

We also consider the \emph{single-source} case\footnote{Roughly speaking, a single-source group Steiner spanner is only required to contain approximate group Steiner paths between a distinguished source vertex and all other vertices in $V$. See \Cref{sec:notation} for a formal definition.}, and we show that any $R \times R$ $\alpha$-spanner for the shortest path metric can be used to construct a single-source group Steiner spanner  with stretch $\alpha +1$. We also provide ad-hoc constructions achieving either the minimum stretch $\alpha=1$ or the minimum conceivable size $n-1$. These results are summarized in \Cref{tab:result1}~(c)), while the corresponding current state of the art for the single-source case is given in \Cref{table:single_source_weighted_results_expanded}. Finally, as an instrumental result to achieve our trade-offs, we also consider the \emph{single-pair} case and show that any group Steiner path can be sparsified to have $O(n)$ edges without increasing its length (see \Cref{crl:single_pair_1,crl:single_pair_2}).

\begin{table}
    \caption{Summary of our results. The size-stretch trade-offs of rows marked with \ding{73} are tight. The rightmost column reports the query time attained by a distance oracle with the same stretch, asymptotic size, and building time class (i.e., polynomial or FPT) of the associated spanner. 
    We use $\gamma$ to denote the approximation ratio of a polynomial-time algorithm for the minimum-cost metric Hamiltonian path problem, while $R=\bigcup_i R_i$ denotes the set of all required vertices.
    Expanded versions of (b) and (c) are respectively shown in \Cref{table:all_pairs_results_expanded} and in \Cref{table:single_source_weighted_results_expanded}.}
    \label{tab:result1}
    \begin{subtable}[h]{1\textwidth}
        \centering
        \caption{All-pairs, singleton case ($|R_i|=1 \; \forall i=1,\dots,k$).}
        \begin{tabular}{c|c|c|c|c||c|}
        \cline{2-6}
        & Stretch & Size & \!\! Building time \!\! & Reference & D.O. query time  \\
        \cline{2-6}
         \ding{73}\!\! & $1$ & $O(kn)$ & polynomial & \Cref{thm:singleton_preserver} & $O(2^k \cdot k^3)$ \\ 
         & $1+\varepsilon$ & $O(n/\varepsilon^2)$ & FPT & \Cref{thm:singleton_1_plus_eps} & $O(1/\varepsilon^2)$ \\
        & $\gamma+\varepsilon$ & $O(n/\varepsilon^2)$ & polynomial & \Cref{thm:singleton_gamma_plus_eps} & $O(1/\varepsilon^2)$ \\
        \ding{73}\!\! & $2$  & $n-1$ & polynomial & \Cref{thm:singleton_stretch_2} & $O(1)$ \\ \cline{2-6}
        \end{tabular}
    \end{subtable}
    \medskip

    \begin{subtable}[h]{1\textwidth}
        \centering
        \caption{All-pairs, general group sizes.}
        \begin{tabular}{|c|c|c|c||c|}
        \hline
        Stretch & Size & \!\! Building time \!\! & Reference & D.O. query time  \\
        \hline
        $2\alpha + 1$ & \!$kn + \! |\bigcup_i \left( R_i\times R_i  \; \alpha \text{-spanner} \right)|$\!  & polynomial & \Cref{thm:all_pairs_general_2alpha_plus_1_spanner} & $O(2^k k \cdot |R|^2+|R|^3)$ \\
        $2\alpha + 1$ & $n + |R\times R  \; \alpha \text{-spanner}|$ & polynomial & \Cref{thm:all_pairs_general_2alpha_plus_1_spanner_bis} & $O(2^k k \cdot |R|^2+|R|^3)$ \\
        \hline
        \end{tabular}
    \end{subtable}
    \medskip      
    
    \begin{subtable}[h]{1\textwidth}
        \centering
        \caption{Single-source, general group sizes.}
        \begin{tabular}{|c|c|c|c||c|}
        \hline
        Stretch & Size & \!\! Building time \!\! & Reference & D.O. query time \\
        \hline
        $1$ & $O(2^kn)$ & FPT & \Cref{thm:single_source_preserver} & $O(1)$ \\
        $3$ & $n-1$ & FPT & \Cref{thm:single_source_stretch_3} & $O(1)$ \\
        $\alpha + 1$ & $O(n) + | R \times R \; \alpha \text{-spanner}|$ & polynomial & \Cref{thm:single_source_general_alpha_plus_2_spanner} & $O(2^k k \cdot |R|^2+|R|^3)$  \\
        \hline
        \end{tabular}
    \end{subtable}
\end{table}

\begin{table}[t]
\caption{Known bounds for classical spanners for both weighted and unweighted graphs that yield the best trade-offs when used in our recipes of \Cref{tab:result1}~(b)~and~(c). \emph{Pairw.} denotes a pairwise spanner, i.e., a spanner which is only required to (approximately) preserve distances between pairs of vertices in $P \subseteq V^2$. Randomized constructions are marked with \epsdice{3}. A \emph{mixed} stretch of $(\alpha,\beta)$ means that the corresponding spanner $H$ approximates the distance from $s$ to $t$ in $G$ within a \emph{multiplicative} stretch of $\alpha$ plus an \emph{additive} stretch of $\beta$ times the maximum edge-weight, say $W_{s,t}$, along a shortest path from $s$ to $t$ in $G$, i.e., $d_H(s,t)\leq \alpha d_G(s,t)+\beta W_{s,t}$. Notice that a spanner with a mixed stretch of $(\alpha,\beta)$ is also a spanner with a purely multiplicative stretch of $\alpha+\beta$.}
\label{table:known_spanners_to_plug_in}
\begin{tabular}{|c|c|c|l|}
\multicolumn{4}{c}{\bf Weighted} \\ \hline 
Type              & Stretch & Size                      & \!Ref.\! \\ \hline
$V \times V$ & $2h-1$     & $O( n^{1  + \sfrac{1}{h}} )$  & \cite{AlthoferDDJS93} \\
Pairw. & $1$     & $O( n + |P|n^{\sfrac{1}{2}} )$  & \cite{CoppersmithE06} \\
Pairw. & $1$     & $O( n |P|^{\sfrac{1}{2}} )$  & \cite{CoppersmithE06} \\
Pairw. & \epsdice{3} $(1,2)$ & $O(n |P|^{\sfrac{1}{3}})$ &  \cite{AhmedBHKS21} \\
Pairw. & \epsdice{3} $(1,4 )$ & $O(n |P|^{\sfrac{2}{7}})$ &  \cite{AhmedBHKS21} \\
$R \times V$ & $4h - 1$ & \!\!\! $O( n \hspace{-0.3pt} + \hspace{-0.3pt} n^{\sfrac{1}{2}} |R|^{1 + \sfrac{1}{h}} )$ \!\!\! & \cite{ElkinFN17}\\
$R \times R$ & \!$(1, 2 \hspace{-0.3pt} + \hspace{-0.3pt} \varepsilon)$\! & $O(n |R|^{\sfrac{1}{2}}/\varepsilon)$ & \cite{ElkinGN23} \\
\hline
\end{tabular}
\begin{tabular}{|c|c|c|l|}
\multicolumn{4}{c}{\bf Unweighted} \\ \hline 
Type              & \!Stretch\! & Size                      & \!Ref.\! \\ \hline
Pairw. & $1$ & \!\!\! $O( n^{\sfrac{2}{3}} |P|^{\sfrac{2}{3}} + n |P|^{\sfrac{1}{3}} )$ \!\!\! & \cite{BodwinWilliams} \\
$R \times V$ & $(1,2)$ & $\widetilde{O}( n^{\sfrac{5}{4}} |R|^{\sfrac{1}{4}} )$ & \cite{Censor-HillelKP16} \\
$R \times V$ & $(1,4)$ & $\widetilde{O}( n^{\sfrac{11}{9}} |R|^{\sfrac{2}{9}} )$  & \cite{Kavitha17} \\
$R \times V$ & $(1,6)$ &  $\widetilde{O}( n^{\sfrac{6}{5}} |R|^{\sfrac{1}{5}} )$ & \cite{Kavitha17}\\
$R \times R$ & $(1,2)$ & $O( n |R|^{\sfrac{1}{2}} )$ & \cite{CyganGK13}\\
\hline
\end{tabular}
\end{table}

\begin{table}[t]
\caption{
State of the art for all-pairs group Steiner spanners of weighted graphs (left), along with additional bounds that only apply to unweighted graphs (right).
Randomized constructions are marked with \epsdice{3}. Combinations that are dominated by results with a better size-stretch trade-off are omitted. All building times are polynomial.
Results marked with \ding{105} are also novel sourcewise  $R \times V$ spanners w.r.t.\ the shortest path metric. The classical spanners used for our combinations are reported in \Cref{table:known_spanners_to_plug_in}.
}
\label{table:all_pairs_results_expanded}
\newcommand{\Rmax}{\ensuremath{R_{\text{max}}}}
\newcommand{\spd}{\phantom{\epsdice{3}}}
\newcommand{\thref}[1]{\hyperref[#1]{Th.~\!\getrefnumber{#1}}}
\newcommand{\obref}[1]{\phantom{0}\hyperref[#1]{Ob.~\!\getrefnumber{#1}}}
\hspace{-15pt}
\begin{tabular}{c|c|c|c|} 
    \multicolumn{1}{c}{} & \multicolumn{3}{c}{\bf All-pairs Weighted} \\ \cline{2-4}
    & $\alpha$ & Size & Reference \\ \cline{2-4}
    & \epsdice{3} $3$  & $O(n^{\sfrac{4}{3}} |R|^{\sfrac{1}{3}})$ & \obref{obs:all_pair_general_alpha}  +  \cite{AhmedBHKS21}\phantom{0}\\
    & \spd{} $3$  & $O(kn + k \Rmax^2 n^{\sfrac{1}{2}})$ & \thref{thm:all_pairs_general_2alpha_plus_1_spanner} + \cite{CoppersmithE06} \\
    & \spd{} $3$  & $O(n + |R|^2 n^{\sfrac{1}{2}})$ & \obref{obs:all_pair_general_alpha} + \cite{ElkinFN17} \\
    & \spd{} $3$ & $O( n^{\sfrac{3}{2}} )$ & \obref{obs:all_pair_general_alpha} + \cite{AlthoferDDJS93}\phantom{0} \\
    & \epsdice{3} $5$  & $O(n^{\sfrac{9}{7}} |R|^{\sfrac{2}{7}})$ & \obref{obs:all_pair_general_alpha} +  \cite{AhmedBHKS21}\phantom{0} \\
    & \spd{} $5$ & $O( n^{\sfrac{4}{3}} )$  & \obref{obs:all_pair_general_alpha} + \cite{AlthoferDDJS93}\phantom{0} \\
    \ding{105}\!\! & \epsdice{3} $7$ & $O(n |R|^{\sfrac{2}{3}})$ & \thref{thm:all_pairs_general_2alpha_plus_1_spanner_bis} + \cite{AhmedBHKS21}\phantom{0}  \\    
    & \spd{} $7$  & $O(n + |R|^{\sfrac{3}{2}} n^{\sfrac{1}{2}})$ & \obref{obs:all_pair_general_alpha} + \cite{ElkinFN17} \\
    & \spd{} $7$ & $O( n^{\sfrac{5}{4}} )$  & \obref{obs:all_pair_general_alpha} + \cite{AlthoferDDJS93}\phantom{0} \\
    \ding{105}\!\! & $7+\varepsilon$ & $O(n |R|^{\sfrac{1}{2}}/\varepsilon)$ & \thref{thm:all_pairs_general_2alpha_plus_1_spanner_bis} + \cite{ElkinGN23} \\
    & $2h-1$ & $O( n^{1+\sfrac{1}{h}} )$  & \obref{obs:all_pair_general_alpha} + \cite{AlthoferDDJS93}\phantom{0} \\
    & $4h-1$ & $O(n + n^{\sfrac{1}{2}}|R|^{1+\sfrac{1}{h}})$ & \obref{obs:all_pair_general_alpha} + \cite{ElkinFN17} \\ \cline{2-4}
\end{tabular}
\hspace{2pt}
\begin{tabular}{c|c|c|c|}
    \multicolumn{1}{c}{} & \multicolumn{3}{c}{\bf All-pairs Unweighted} \\ \cline{2-4} 
    & $\alpha$ & Size & Reference \\ \cline{2-4}
    \ding{105}\!\! & $3$ & $O( n^{\sfrac{2}{3}} |R|^{\sfrac{4}{3}} + n |R|^{\sfrac{2}{3}} )$ & \thref{thm:all_pairs_general_2alpha_plus_1_spanner_bis} + \cite{BodwinWilliams}\phantom{0} \\
    & $3$ & \!\!\! $O(k n^{\sfrac{2}{3}} \Rmax^{\sfrac{4}{3}} \!\! + \! k n \Rmax^{\sfrac{2}{3}})$ \!\!\! & \thref{thm:all_pairs_general_2alpha_plus_1_spanner} + \cite{BodwinWilliams}\phantom{0} \\ 
    & $3$ & $\widetilde{O}(n^{\sfrac{5}{4}}|R|^{\sfrac{1}{4}})$ & \obref{obs:all_pair_general_alpha} + \cite{Censor-HillelKP16}\phantom{0}\\
    & $5$ & $\widetilde{O}(n^{\sfrac{11}{9}}|R|^{\sfrac{2}{9}})$ & \obref{obs:all_pair_general_alpha} + \cite{Kavitha17}\\
    & $7$ & $\widetilde{O}(n^{\sfrac{6}{5}}|R|^{\sfrac{1}{5}})$ & \obref{obs:all_pair_general_alpha} + \cite{Kavitha17}\\ 
    & $7$  & $O(n + |R|^{\sfrac{3}{2}} n^{\sfrac{1}{2}})$ & \obref{obs:all_pair_general_alpha} + \cite{ElkinFN17} \\
    \ding{105}\!\! & $7$ & $O(n |R|^{\sfrac{1}{2}})$ & \thref{thm:all_pairs_general_2alpha_plus_1_spanner_bis} + \cite{CyganGK13} \\ \cline{2-4}
\end{tabular}
\end{table}

\begin{table}[t]
\caption{State of the art for single-source group Steiner spanners of weighted graphs (top), along with additional bounds that only apply to unweighted graphs (bottom).
Randomized constructions are marked with \epsdice{3}. Combinations that are dominated by results with a better size-stretch trade-off are omitted. The classical spanners used for our combinations are reported in \Cref{table:known_spanners_to_plug_in}.} 
\label{table:single_source_weighted_results_expanded}
\newcommand{\spd}{\phantom{\epsdice{3}}}
\newcommand{\thref}[1]{\hyperref[#1]{Th.~\getrefnumber{#1}}}
\newcommand{\obref}[1]{\phantom{0}\hyperref[#1]{Ob.~\!\getrefnumber{#1}}}
\centering
\begin{tabular}{|c|c|c|c|}
    \multicolumn{4}{c}{\bf Single-source weighted} \\ \hline
    $\alpha$ & Size & Building time & Reference \\ \hline
    \spd{} $1$       & $O(2^k n)$ &  FPT & \Cref{thm:single_source_preserver} \\
    \spd{} $2$       & $O( n + |R|^2n^{\sfrac{1}{2}} )$ & polynomial & \thref{thm:single_source_general_alpha_plus_2_spanner} + \cite{CoppersmithE06}  \\
    \spd{} $3$       & $n-1$ &  FPT & \Cref{thm:single_source_stretch_3} \\
    \spd{} $3$ & $O( n^{\sfrac{3}{2}} )$  & polynomial & \obref{obs:all_pair_general_alpha} + \cite{AlthoferDDJS93}  \\
    \epsdice{3} $4$    & $O(n |R|^{\sfrac{2}{3}})$  & polynomial & \thref{thm:single_source_general_alpha_plus_2_spanner} + \cite{AhmedBHKS21}\phantom{0}  \\
    $4+\varepsilon$ & $O(n |R|^{\sfrac{1}{2}} / \varepsilon)$ & polynomial & \thref{thm:single_source_general_alpha_plus_2_spanner} + \cite{ElkinGN23} \\
    $2h-1$ & $O( n^{1+\sfrac{1}{h}} )$  & polynomial & \obref{obs:all_pair_general_alpha} + \cite{AlthoferDDJS93}\phantom{0} \\
    $4h-1$ & $O(n+ n^{\sfrac{1}{2}} |R|^{1+\sfrac{1}{h}})$ & polynomial & \obref{obs:all_pair_general_alpha} + \cite{ElkinFN17} \\ \hline
\end{tabular}
\medskip

\begin{tabular}{|c|c|c|c|}
    \multicolumn{4}{c}{\bf Single-source unweighted} \\ \hline
    $\alpha$ & Size & Building time & Reference  \\ \hline
    $2$ & $O(n^{\sfrac{2}{3}} |R|^{\sfrac{4}{3}} + n |R|^{\sfrac{2}{3}} )$ & polynomial & \thref{thm:single_source_general_alpha_plus_2_spanner} + \cite{BodwinWilliams}\phantom{0}  \\
    $4$ & $O(n |R|^{\sfrac{1}{2}})$ & polynomial & \thref{thm:single_source_general_alpha_plus_2_spanner} + \cite{CyganGK13} \\
    $6$ & $\widetilde{O}( n^{\sfrac{11}{9}} |R|^{\sfrac{2}{9}} )$ & polynomial & \thref{thm:single_source_general_alpha_plus_2_spanner} + \cite{Kavitha17}  \\
    $8$ &  $\widetilde{O}( n^{\sfrac{6}{5}} |R|^{\sfrac{1}{5}} )$ & polynomial & \thref{thm:single_source_general_alpha_plus_2_spanner} + \cite{Kavitha17}    \\
    \hline
\end{tabular}
\end{table}

\subparagraph*{Turning our spanners into group Steiner distance oracles.}

We also investigate the problem of turning our group Steiner spanners into group Steiner distance oracles. For each of our spanners, we provide a corresponding oracle with the same stretch, the same asymptotic size and the same class of building time.\footnote{We classify the building times into one of two coarse classes, namely \emph{polynomial} and \emph{FPT}, depending on whether the spanner/oracle can be computed in time $n^{O(1)}$ or $f(k)n^{O(1)}$.} 
The distance query times are reported in \Cref{tab:result1}. Some of these query times are constant, and in this case our oracles are also able to report a corresponding group Steiner path in an additional time proportional to the number path's edges.  
The remaining query times are exponential in $k$ and this is unavoidable. Indeed, consider the group Steiner spanner in \Cref{tab:result1}~(a) with stretch $1$ and polynomial building time. It can be shown (see \Cref{app:complexity_gsp} for details) that any corresponding oracle with polynomial query time would be able to report the cost of a minimum-cost metric Hamiltonian path which is known to be \NP-hard. For similar reasons, any oracle for general group sizes that has polynomial  building time and stretch $\log^{2-\varepsilon} k$, for constant $\varepsilon>0$, cannot have polynomial query time (regardless of its size), even for the single-pair oracle, due to the inaproximability of computing group Steiner distances. 

Finally, we emphasize that the two oracles for the singleton case with stretch $1+\varepsilon$ and $\gamma+\varepsilon$ are, in some sense, tight. 
Indeed, since a distance oracle can be used to compute group Steiner distances, the building time of the former oracle (having stretch $1+\varepsilon$) cannot be improved to polynomial time since the minimum-cost metric Hamiltonian path problem is $\mathsf{APX}$-hard \cite{KarpinskiLS15}; while improving the stretch to a value better than $\gamma$ in the latter oracle would provide a better than $\gamma$-approximation for the minimum-cost metric Hamiltonian path problem.

\subsection{Related work}

There is a huge body of literature on graph spanners and distance oracles w.r.t.\ the shortest-path metric. 
Since we mostly focus on the stretch-size trade-offs of our group Steiner spanners, in the following we discuss the related work providing the best size-stretch trade-offs for spanners. The reader interested in efficient computation of spanners is referred to~\cite{LeS22} and to the references therein.

A classical result shows that it is possible to build all-pairs spanners with stretch $2h-1$ and size $O(n^{1+\frac{1}{h}})$, for every integer $h \ge 1$~\cite{AlthoferDDJS93} . For $h \in \{1,2,3,5\}$ these asymptotic bounds are (unconditionally) tight~\cite{Tits1959SurLT,Wenger1991}, and in general, for every $h$, matching asymptotic lower bounds can be proved assuming the Erd\H{o}s girth conjecture~\cite{erdos1965some}.
The reader is referred to \cite{BodwinSpanners} for a survey which also discusses other notions of stretch (e.g., additive and mixed distortions), as well as generalizations in which good distance approximations only need to be maintained between specific pairs of vertices of interest (as sourcewise, subsetwise, or pairwise spanners). 
In~\cite{ThorupDO,Chechik14,Chechik15}, the authors show how to build, in polynomial time, distance oracles achieving the same size-stretch trade-offs and having constant query time.

As mentioned above, group Steiner paths for the special case $k=1$ are known in the literature as \emph{beer paths}. The notion of beer paths (and the corresponding \emph{beer distance}) has been first introduced in \cite{Bacic0S21,BacicMS23}, where the authors show how to construct \emph{beer distance oracles} for outerplanar graphs that are able to report exact beer distances. Subsequent works showed how to construct beer distance oracles for interval graphs \cite{Das0KMNW22} and graphs with bounded treewidth \cite{GudmundssonS23}. Construction of beer distance oracles for graphs that admit either good tree decomposition or good triconnected component decompositions have been studied in \cite{HanakaOSS23}.
None of the above results yields a non-trivial beer distance oracle for general graphs, hence they cannot readily be compared with group Steiner distance oracles (which also handle $k > 1$ groups).

Several distance metrics involving paths that are required to traverse groups of vertices have already been considered in the context of optimization problems. For example, the \emph{generalized TSP} problem \cite{POP2024819} asks to find a shortest tour in a graph that visits at least one vertex from each group, which corresponds to finding the shortest  group Steiner path from a required vertex in the optimal tour to itself. Elbassoni et al., studied a geometric version of the above problem called \emph{Euclidean group TSP} \cite{ElbassioniFMS05}.
A related optimization problem involving both waypoints and capacity constraints is known as \emph{waypoint routing} and has been studied in \cite{AmiriFS20}. 

Finally, we point out that our metric should not be confused with a different measure also called \emph{Steiner distance} that is defined as the weight of the lightest Steiner tree connecting as set of vertices (see, e.g., \cite{Chartrand1989,mao2017steiner}).

\section{Preliminaries}
\subsection{Notation}\label{sec:notation}

We denote by $G=(V,E)$ a connected undirected graph with $n$ vertices, $m$ edges, and with a non-negative edge-weight $w(e)$ associated with each $e \in E$, We also denote by $\mathcal{R} = \{R_1, \dots, R_k\}$ a collection of $k$ non-empty subsets of $V$, which we refer to as \emph{groups} of \emph{required} vertices. We denote by $R = \bigcup_{i=1}^k R_i$. 

Throughout this work, we use the term path to refer to walks, i.e., our paths are not necessarily simple. 
A \emph{group Steiner path} from $s$ to $t$ in $G$ w.r.t.\ $\mathcal{R}$ is a path from $s$ to $t$ that contains at least one vertex from each group. The length $w(\pi)$ of a path $\pi$ is the sum of all its edge weights, with multiplicity (see \Cref{fig:example-group-steiner-path}).
The \emph{group Steiner distance} $\BD_G(s,t \mid \mathcal{R})$ between $s$ and $t$ w.r.t.\ $\mathcal{R}$ in $G$ is the length of the shortest group Steiner path from $s$ to $t$ in $G$. Whenever $\mathcal{R}$ is clear from the context, we omit it from the notation.

Given $\alpha \geq 1$, a \emph{group Steiner $\alpha$-spanner of $G$} w.r.t.\ $\mathcal{R}$ is a spanning subgraph $H$ of $G$ such that:
\begin{equation}
    \label{eq:group_Steiner_spanner}
    \BD_H(s,t) \le \alpha \cdot \BD_G(s,t),
\end{equation} for every $s, t \in V$. 
We denote by $|H|$ the \emph{size} of $H$ which corresponds to the number of edges contained in $H$.
When $\alpha=1$, a group Steiner $1$-spanner of $G$ is called a \emph{group Steiner preserver} as it preserves group Steiner distances between all-pairs of vertices. 

The classical notion of graph $\alpha$-spanner w.r.t.\ the shortest-path metric is analogous, once the group Steiner distances $\BD_H(s,t)$ and $\BD_G(s,t)$ in \Cref{eq:group_Steiner_spanner} are replaced with the lengths $d_H(s,t)$ and $d_G(s,t)$ of a shortest path in $H$ and $G$, respectively.

Regardless of the distance metric of interest, we can restrict the pairs of vertices for which the distances in $H$ must $\alpha$-approximate the corresponding distances in $G$ to those in the set $S \times T$, for some choice of $S,T \subseteq V$. The definition of $\alpha$-spanner in~\Cref{eq:group_Steiner_spanner} corresponds to the \emph{all-pairs} case in which $S,T=V$. In \emph{sourcewise} spanners we have $S \subseteq V$ and $T=V$. \emph{Single-source} spanners are a special case of sourcewise spanners in which $S=\{s\}$, for some source vertex $s \in V$. Finally, in \emph{subsetwise} spanners we have $S=T$, with $S\subseteq V$.

Given a path $\pi$ from $s$ to $v$ in $G$ and a path $\pi'$ from $v$ to $t$ in $G$, we denote by $\pi \circ \pi'$ the path obtained by concatenating $\pi$ with $\pi'$. Moreover, given two occurrences $x,y$ of vertices in $\pi$, we denote by $\pi[x:y]$ the subpath of $\pi$ between $x$ and $y$.
Notice that the same vertex can appear multiple times in $\pi$ however, whenever the occurrence of interest is clear from context, we may slightly abuse the notation and use vertices in place of their specific occurrences.

We conclude this preliminary section by observing a  structural property of group Steiner paths and its important immediate consequences. The property can be proved by a simple cut-and-paste argument, and intuitively shows that a shortest group Steiner path can be seen as a concatenation of up to $k+1$ subpaths, each of which is a shortest path in $G$ between two vertices in $R \cup \{s,t\}$. This is formalized in the following:

\begin{lemma}    
    \label{lemma:group_steiner_path_decomposition}
    Let $\pi = \langle s= v_0, v_1, \dots, v_\ell = t \rangle$ be a shortest group Steiner path between two vertices $s$ and $t$ in $G$. 
    Let $j_1, \dots, j_h$ be $h$ indices such that $0 \le j_1 < j_2 < \dots < j_h \le \ell$ and $\{ v_{j_1}, \dots, v_{j_h} \} \cap R_i \neq \emptyset$ for all $i = 1 \dots, k$. 
    For every $i=0, \dots, h$, $\langle v_{j_i}, v_{j_i+1}, \dots, v_{j_{i+1}} \rangle$ is a shortest path between $v_{j_i}$ and $v_{j_{i+1}}$ in $G$, where  $j_0 = 0$ and $j_{h+1} = \ell$.
\end{lemma}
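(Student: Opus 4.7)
The plan is a direct cut-and-paste argument by contradiction. Suppose, toward contradiction, that the lemma fails; then there exists some $i \in \{0,\dots,h\}$ such that the subpath $\pi[v_{j_i}:v_{j_{i+1}}] = \langle v_{j_i}, v_{j_i+1}, \dots, v_{j_{i+1}}\rangle$ is not a shortest path between $v_{j_i}$ and $v_{j_{i+1}}$ in $G$. Then there is a path $\pi^\star$ between these two endpoints in $G$ with $w(\pi^\star) < w(\pi[v_{j_i}:v_{j_{i+1}}])$.

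Next, I would build the candidate witness by splicing: define
\[
\pi' = \pi[v_0 : v_{j_i}] \circ \pi^\star \circ \pi[v_{j_{i+1}} : v_\ell],
\]
which is a well-defined walk from $s$ to $t$ in $G$ (recall that we allow non-simple paths, so there is no issue if $\pi^\star$ revisits vertices or reuses edges). Its length is strictly smaller than $w(\pi)$ by construction, since only the middle segment has been replaced by a strictly shorter one.

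It then remains to verify that $\pi'$ is a group Steiner path, i.e., that it hits every group $R_j$ for $j=1,\dots,k$. By hypothesis, the set $\{v_{j_1},\dots,v_{j_h}\}$ already hits every group. Each such vertex $v_{j_r}$ is an endpoint of some subpath in the decomposition, hence it lies in $\pi[v_0:v_{j_i}]$ or in $\pi[v_{j_{i+1}}:v_\ell]$, or it coincides with $v_{j_i}$ or $v_{j_{i+1}}$, which are precisely the endpoints of $\pi^\star$. In either case, $v_{j_r}$ survives in $\pi'$, so $\pi'$ still intersects every $R_j$. This produces a group Steiner path from $s$ to $t$ of length strictly less than $w(\pi)$, contradicting the optimality of $\pi$ and completing the proof.

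The argument is essentially routine; the only point that needs care is the observation that, because we allow walks, the spliced object $\pi'$ is automatically a valid path in $G$, so no additional handling of repeated vertices or edges is required.
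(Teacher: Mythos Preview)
Your proof is correct and is exactly the cut-and-paste argument the paper itself alludes to (the paper does not spell out a full proof, merely stating that the lemma ``can be proved by a simple cut-and-paste argument''). Your verification that every $v_{j_r}$ survives in $\pi'$ is the only point requiring care, and you handle it correctly.
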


The above property immediately implies the following results:

\begin{observation}
    \label{obs:all_pair_general_alpha}
    Any sourcewise $R \times V$ (and hence also any all-pairs $V \times V$) $\alpha$-spanner w.r.t.\ the shortest path metric is an all-pairs group Steiner $\alpha$-spanner.
\end{observation}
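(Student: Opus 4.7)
The plan is to reduce the statement directly to \Cref{lemma:group_steiner_path_decomposition}. Let $H$ be a sourcewise $R \times V$ $\alpha$-spanner of $G$ w.r.t.\ the shortest path metric, and fix $s,t \in V$. I would start from a shortest group Steiner path $\pi$ from $s$ to $t$ in $G$ and invoke the lemma to write $\pi$ as a concatenation $\pi_0 \circ \pi_1 \circ \cdots \circ \pi_h$ of at most $k+1$ shortest paths in $G$, with breakpoints $s = v_{j_0}, v_{j_1}, \dots, v_{j_h}, v_{j_{h+1}} = t$ where the vertices $v_{j_1}, \dots, v_{j_h}$ together hit every group $R_i$.

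The key observation is that each $v_{j_i}$ with $1 \le i \le h$ must lie in $R$: this is immediate when $0 < j_i < \ell$, and when $j_i \in \{0,\ell\}$ it still holds because the corresponding endpoint $s$ or $t$ must single-handedly cover at least one group $R_i$, forcing it into $R$. Consequently, each subpath $\pi_i$ has at least one endpoint in $R$. Using the sourcewise $R \times V$ property of $H$, I can replace each $\pi_i$ by a path $\pi_i^H$ in $H$ between the same endpoints of total length at most $\alpha \cdot w(\pi_i)$, since $\pi_i$ was a shortest path in $G$.

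Concatenating these gives a walk $\pi^H = \pi_0^H \circ \pi_1^H \circ \cdots \circ \pi_h^H$ from $s$ to $t$ in $H$ that still visits each $v_{j_1}, \dots, v_{j_h}$, hence is a group Steiner path w.r.t.\ $\mathcal{R}$. Its total length satisfies
\[
w(\pi^H) \;=\; \sum_{i=0}^{h} w(\pi_i^H) \;\le\; \alpha \sum_{i=0}^{h} w(\pi_i) \;=\; \alpha \cdot w(\pi) \;=\; \alpha \cdot \BD_G(s,t),
\]
so $\BD_H(s,t) \le \alpha \cdot \BD_G(s,t)$, as required. The parenthetical remark about $V \times V$ spanners is immediate since any all-pairs $\alpha$-spanner is in particular a sourcewise $R \times V$ $\alpha$-spanner.

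There is essentially no obstacle beyond the small case-check that $v_{j_i} \in R$ when $j_i \in \{0,\ell\}$; the main content of the argument is packaged inside \Cref{lemma:group_steiner_path_decomposition}, and everything else is a one-line triangle-type estimate along the decomposition.
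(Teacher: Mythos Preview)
Your proof is correct and mirrors the paper's approach, which also decomposes a shortest group Steiner path at required vertices via \Cref{lemma:group_steiner_path_decomposition} and bounds each segment using the sourcewise property. One quibble: the claim that each $v_{j_i}$ lies in $R$ is not automatic from the lemma (which permits arbitrary breakpoints satisfying the hitting condition), so it is cleaner---as the paper does---to simply \emph{choose} the breakpoints to be one required vertex per group, rendering your case-check for $j_i \in \{0,\ell\}$ unnecessary.
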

\begin{proof}
Consider an optimal group Steiner path $\pi^*$ from $s$ to $t$. Clearly, $\pi^*$ must traverse all the groups in $\mathcal{R}$.
Without loss of generality, assume that $\pi^*$ traverses the groups $R_1,\dots,R_k$ in this order (otherwise, re-index the groups properly), and let $r_i$ be the first required vertex in $R_i$ reached by $\pi^*$.
Then, using \Cref{lemma:group_steiner_path_decomposition}:
\begin{align*}
    \BD_H(s,t)
    &\leq d_H(s,r_1) + d_H(r_1,r_2) + \dots + d_H(r_k,t)\\
    &\le \alpha \cdot (d_G(s,r_1) + d_G(r_1,r_2) + \dots + d_G(r_k,t)) \le \alpha  \cdot \BD_G(s,t). \tag*{\qedhere}
\end{align*}
\end{proof}

\begin{observation}
    \label{obs:gs_spanner_eq_source_wise}
    Any group Steiner $\alpha$-spanner with $R_1 = R_2 = \dots = R_k = R$ is a sourcewise $R \times V$ $\alpha$-spanner w.r.t.\ the shortest path metric.
\end{observation}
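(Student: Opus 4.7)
The plan is to exploit the fact that when all groups coincide with $R$, any vertex of $R$ is a member of \emph{every} group. Fix $s \in R$ and $t \in V$, so we need to establish $d_H(s,t) \leq \alpha \cdot d_G(s,t)$.

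First I would observe that, since $s \in R = R_i$ for every $i = 1,\dots,k$, every path starting at $s$ automatically contains at least one vertex from each group $R_i$ (namely $s$ itself). Consequently every $s$--$t$ path in $G$ is a group Steiner path w.r.t.\ $\mathcal{R}$, and the same holds in $H$. This yields the two equalities
\begin{equation*}
    \BD_G(s,t) = d_G(s,t), \qquad \BD_H(s,t) = d_H(s,t).
\end{equation*}

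Next, I would invoke the hypothesis that $H$ is a group Steiner $\alpha$-spanner of $G$ w.r.t.\ $\mathcal{R}$ to get $\BD_H(s,t) \leq \alpha \cdot \BD_G(s,t)$. Chaining with the two equalities above gives $d_H(s,t) \leq \alpha \cdot d_G(s,t)$, which is exactly the sourcewise $R \times V$ $\alpha$-spanner condition. Since $s \in R$ and $t \in V$ were arbitrary, the conclusion follows.

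There is no real obstacle here; the argument is essentially a tautology once one notices that when every group equals $R$, the ``required'' condition is automatically satisfied by any endpoint that lies in $R$, so group Steiner distances with a source in $R$ collapse to ordinary shortest path distances.
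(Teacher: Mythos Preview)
Your proof is correct and follows essentially the same approach as the paper. The only cosmetic difference is that the paper invokes \Cref{lemma:group_steiner_path_decomposition} to justify $\BD_G(r,t)=d_G(r,t)$ and then uses the one-sided inequality $d_H(r,t)\le\BD_H(r,t)$, whereas you obtain both equalities $\BD_G(s,t)=d_G(s,t)$ and $\BD_H(s,t)=d_H(s,t)$ directly from the observation that $s$ already lies in every group; this is arguably a touch more elementary but amounts to the same idea.
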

\begin{proof}
Let $H$ be a group Steiner spanner with $R_1 = R_2 = \dots = R_k = R$, and pick $r \in R$. For any $t \in V$, \Cref{lemma:group_steiner_path_decomposition} implies that $\BD_G(r, t) = d_G(r,t)$, hence: $d_H(r,t) \le \BD_H(r, t) \le \alpha \cdot \BD_G(r,t) = \alpha \cdot d_G(r,t)$.
\end{proof}

\subsection{Single-pair group Steiner spanners}

Our technical discussion begins with a result for the single-pair case that, other than being interesting in its own regard, will also be instrumental to construct our group Steiner spanners.

\begin{lemma}\label{lm:linear_size_path}
Let $\pi$ be a group Steiner path from $s$ to $t$ in $G$.
We can process $\pi$ in polynomial time to build a group Steiner path $\pi'$ from $s$ to $t$ in $G$ that traverses at most $2(n-1)$ edges (with multiplicity) and such that $w(\pi') \leq w(\pi)$. 
\end{lemma}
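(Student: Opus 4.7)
The plan is to iteratively sparsify $\pi$ via two local simplifications, each executable in polynomial time and neither increasing $w(\pi)$. Because each simplification strictly reduces the number of edges of $\pi$ (counted with multiplicity), the overall procedure terminates in polynomial time and outputs a walk $\pi'$ with $w(\pi')\le w(\pi)$.

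The first simplification eliminates edge overuses. If some edge $e=\{x,y\}$ is traversed by $\pi$ at least three times, I locate three such uses and write the corresponding portion of $\pi$ as $e\cdot W_1\cdot e\cdot W_2\cdot e$, where (after orienting $e$ appropriately at each traversal) $W_1$ is a closed walk at $y$ and $W_2$ is a closed walk at $x$. Rewriting this portion as $W_2\cdot e\cdot\overline{W_1}$, with $\overline{W_1}$ the reverse of $W_1$ (a valid walk because $G$ is undirected), yields a walk that uses $e$ only once and visits the same multiset of vertices; it is therefore still a group Steiner path, and it saves $2w(e)$ of weight. The second simplification removes redundant detours: whenever there exist positions $i<j$ with $v_i=v_j$ such that every group $R_\alpha$ with $R_\alpha\cap\{v_{i+1},\dots,v_{j-1}\}\ne\emptyset$ also has some representative elsewhere in $\pi$, I replace $\pi$ by $\pi[0:i]\circ\pi[j:\ell]$, obtaining a strictly lighter walk with fewer edges that is still a group Steiner path.

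The main obstacle is establishing the bound $|\pi'|\le 2(n-1)$ once both simplifications have been exhausted. From the first simplification, every edge of the multigraph $M$ underlying $\pi'$ has multiplicity at most two, so it suffices to argue that the underlying simple graph $G_M$ has at most $n-1$ edges. My plan is to show that any cycle in $G_M$ could be eliminated by a standard re-sequencing of the Eulerian $(s,t)$-trail on $M$ that makes the cycle into a consecutive closed subwalk of $\pi'$, after which the second simplification removes it, unless some vertex of the cycle is a uniquely covered required vertex. Ruling out such ``essential'' cycles is the technical crux: I plan to invoke \Cref{lemma:group_steiner_path_decomposition}, applied to the shortest group Steiner walk whose underlying simple graph equals $G_M$, and use the decomposition into shortest $G$-paths between the essential representatives to exhibit a strictly lighter walk whenever such a cycle persists — contradicting the fact that $\pi'$ can no longer be improved by either simplification.
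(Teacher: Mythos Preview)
Your two local simplifications are sensible (though the first needs a short case analysis on the three possible orientation patterns of the traversals of $e$; the rewriting $W_2\cdot e\cdot\overline{W_1}$ as written only covers the alternating pattern $x\to y,\,y\to x,\,x\to y$). The real problem is the argument for the edge bound.

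Your plan is to show that the simple graph $G_M$ underlying the final walk is a forest. This is false in general. Take $G$ to be an $n$-cycle $v_0,\dots,v_{n-1},v_0$ with $s=t=v_0$ and a single group $R_1=\{v_{\lfloor n/2\rfloor}\}$, and let $\pi$ be the walk once around the cycle. Each edge is used once, so your first rule does nothing; the only repeated vertex is $v_0$, and the unique closed subwalk $\pi[0{:}n]$ contains the sole representative of $R_1$, so your second rule does nothing either. Thus $\pi'=\pi$, $G_M$ is the whole cycle, and no further simplification is available. The bound $n\le 2(n-1)$ still holds, but not for the reason you propose. Your fallback for such ``essential'' cycles is to invoke \Cref{lemma:group_steiner_path_decomposition} on ``the shortest group Steiner walk whose underlying simple graph equals $G_M$'' and produce a strictly lighter walk. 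But \Cref{lemma:group_steiner_path_decomposition} is a statement about \emph{shortest} group Steiner paths, whereas $\pi'$ is only guaranteed to satisfy $w(\pi')\le w(\pi)$ for an arbitrary input $\pi$; in the cycle example $\pi'$ is already optimal, so no strictly lighter walk exists. There is also a structural gap in the non-essential case: re-sequencing the Eulerian trail yields a \emph{different} walk $\pi''$ with the same edge multiset, and the fact that $\pi''$ can be shortened does not contradict the local minimality of $\pi'$ under your two rules, since re-sequencing is not one of your allowed moves.

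For comparison, the paper does not iterate at all. It adds a weight-$0$ edge $(s,t)$ to the multigraph $H$ of $\pi$ to make it Eulerian, picks a spanning tree $T$ of $H$ through that edge, and then finds a $T$-join of size at most $n-1$ inside a spanning forest of $H\setminus T$; the union is an Eulerian sub-multigraph with at most $2(n-1)$ edges whose Euler tour, minus the special edge, is $\pi'$. This sidesteps all the difficulties above: no cycle elimination, no appeal to optimality of $\pi$, and the edge count is immediate.
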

\begin{proof}
We create a \emph{multi}-graph $H$ with vertex-set $V(\pi)$ by i) following $\pi$ and adding each traversed edge to $H$ (notice that this results in parallel edges whenever the same edge is traversed more than once in $\pi$), and ii) adding a \emph{special} edge $(s,t)$ of weight $0$.
Notice, that, by construction, $w(H)=w(\pi)$; moreover, $H$ is Eulerian.

Compute any spanning tree $T$ of $H$ that contains the special edge $(s,t)$, let $D \subseteq V(T)$ be the set of vertices of $T$ with \emph{odd} degree in $T$, and observe that $|D|$ is even.
Since $H$ is Eulerian, all vertices of $H$ have even degree, hence the vertices of odd degree in $H'=(V(H), E(H) \setminus E(T))$ are exactly those in $D$.

We will show how to partition $D$ into pairs of vertices $\{ x_1, y_1 \}, \ldots, \{ x_{|D|/2}, y_{|D|/2} \}$ and find a path $\pi_i$ between $x_i$ and $y_i$ in $H'$ for each such pair.
All the paths $\pi_i$ will be edge-disjoint and will use at most $n-1$ edges in total. Therefore, the multi-graph $H^*$ obtained by adding to $T$ all the edges in $\bigcup_{i=1}^{|D|/2} E(\pi_i)$ is Eulerian and it is a subgraph of $H$.
We can then find the group Steiner path $\pi'$ by computing an Eulerian tour of $H^*$ and removing the special edge $(s,t)$.
Clearly, $w(\pi') \le w(H^*)\le w(H)= w(\pi)$.
Moreover, the number of edges of $\pi'$ is at most $2(n-1)$, since $H^*$ is the union of the tree $T$ with at most $n-1$ additional edges.

It remains to show how the partition of $D$ and the corresponding paths can be found.
In order to do so, we consider a spanning forest $F$ of $H'$ and we separately handle each connected component $T'$ of $F$ that contains at least one vertex in $D$.

Consider any such $T'$ and notice that it must actually contain an even number of vertices $D' = V(T') \cap D$. Then, we can match the vertices of $D'$ in pairs in such a way that the (unique) paths in $T'$ between matched vertices are edge-disjoint (see \Cref{fig:edge_disjoint_paths}).

\begin{figure}
    \centering
    \includegraphics[width=.4\linewidth]{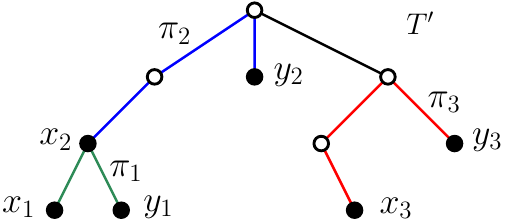}
    \caption{The subtree $T'$ used in the proof of \Cref{lm:linear_size_path}. The vertices in $D'$ are black and are connected in pairs by the edge-disjoint paths $\pi_1, \dots, \pi_{|D'| / 2}$, where the endvertices of $\pi_i$ are $x_i$~and~$y_i$.}
    \label{fig:edge_disjoint_paths}
\end{figure}

To match the vertices in $D'$ we use the following iterative algorithm: initially all vertices in $D'$ are unmarked, we find a pair of unmarked vertices $x,y \in D'$ with the deepest lowest common ancestor in $T'$, we match $x$ with $y$ and mark both vertices $x$ and $y$, and we repeat the above procedure until all vertices in $D'$ are marked. Notice that, since all paths between matched vertices, are edge-disjoint and belong to $F$, the overall number of edges is at most $n-1$. 
\end{proof}

From~\Cref{lm:linear_size_path}, we can easily derive the following corollaries. The first corollary is a direct consequence that a shortest group Steiner path can be computed in $2^k k \cdot n^{O(1)}$ time (see \Cref{sec:computing_group_Steiner_paths}), while the second corollary, which holds only for the singleton case, comes from the fact that we can compute a $3/2$-approximation of the shortest group Steiner path in $G$ (see \Cref{apx:steiner_path_singleton_case}). 

\begin{corollary}\label{crl:single_pair_1}
We can compute a single-pair shortest group Steiner path of size at most $2(n-1)$ in $2^k k \cdot n^{O(1)}$ time.
\end{corollary}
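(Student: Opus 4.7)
The plan is to derive the corollary by combining two ingredients that are already at hand: the FPT algorithm for computing a shortest group Steiner path (the $2^k k \cdot n^{O(1)}$-time procedure referenced in \Cref{sec:computing_group_Steiner_paths}), and the sparsification procedure of \Cref{lm:linear_size_path}.

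First, I would invoke the FPT algorithm on the input $(G, s, t, \mathcal{R})$ to obtain, in time $2^k k \cdot n^{O(1)}$, some shortest group Steiner path $\pi$ from $s$ to $t$. A priori, $\pi$ can traverse up to an exponential (in the input size) number of edges counted with multiplicity, so it is not directly suitable as the output we want.

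Then I would feed $\pi$ to the polynomial-time procedure of \Cref{lm:linear_size_path}, obtaining a group Steiner path $\pi'$ from $s$ to $t$ that uses at most $2(n-1)$ edges (with multiplicity) and satisfies $w(\pi') \le w(\pi)$. Since $\pi$ is optimal and $\pi'$ is itself a group Steiner path from $s$ to $t$, we must in fact have $w(\pi') = w(\pi) = \BD_G(s,t)$, so $\pi'$ is a shortest group Steiner path of the claimed size.

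The overall running time is $2^k k \cdot n^{O(1)} + n^{O(1)} = 2^k k \cdot n^{O(1)}$, as required. There is no genuine obstacle here beyond observing that \Cref{lm:linear_size_path} is weight-non-increasing, which is precisely what allows optimality of $\pi$ to be inherited by $\pi'$; all the non-trivial work (the FPT computation and the Eulerian sparsification) is done by the two results being composed.
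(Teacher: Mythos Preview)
Your proof is correct and follows exactly the paper's approach: compute a shortest group Steiner path via the FPT algorithm of \Cref{sec:computing_group_Steiner_paths}, then sparsify it with \Cref{lm:linear_size_path}. One minor remark: the path $\pi$ returned by the FPT algorithm is already a concatenation of at most $k+1$ shortest (hence simple) paths by \Cref{lemma:group_steiner_path_decomposition}, so it has $O(kn)$ edges and your worry about exponential size is unfounded---though this does not affect the validity of your argument.
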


\begin{corollary}\label{crl:single_pair_2}
For the singleton case in which $|R_i|=1$ for all $i=1,\ldots,k$, there is a polynomial-time algorithm that computes a single-pair group Steiner path with stretch $3/2$ and size at most $2(n-1)$. 
\end{corollary}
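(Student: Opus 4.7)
The plan is to split the argument into two steps: compute a $3/2$-approximate shortest group Steiner path in polynomial time, then sparsify it using \Cref{lm:linear_size_path} to bring the edge count down to $2(n-1)$.

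First, I would reformulate the singleton-case problem as a metric Hamiltonian $s$-$t$ path problem. Writing $\{r_i\} = R_i$, a group Steiner path from $s$ to $t$ is an $s$-$t$ walk that visits each of $r_1,\dots,r_k$, and by \Cref{lemma:group_steiner_path_decomposition} any shortest such walk decomposes into at most $k+1$ shortest $G$-paths whose endpoints lie in $\{s,t,r_1,\dots,r_k\}$. Consequently, the task is equivalent to computing a minimum-weight Hamiltonian $s$-$t$ path in the complete auxiliary graph on $\{s,t,r_1,\dots,r_k\}$ where $\{u,v\}$ has weight $d_G(u,v)$; since $d_G(\cdot,\cdot)$ is a metric, I would invoke the polynomial-time $3/2$-approximation algorithm to be presented in \Cref{apx:steiner_path_singleton_case} (a Christofides-like construction adapted to the fixed-endpoint path variant). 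Re-expanding each auxiliary edge into the corresponding shortest path in $G$ yields a group Steiner path $\pi$ from $s$ to $t$ with $w(\pi) \le (3/2)\cdot \BD_G(s,t)$.

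Second, I would feed $\pi$ into \Cref{lm:linear_size_path} to obtain, in polynomial time, a group Steiner path $\pi'$ from $s$ to $t$ with $w(\pi') \le w(\pi)$ and at most $2(n-1)$ edges counted with multiplicity. Chaining the two inequalities gives $w(\pi') \le (3/2)\cdot \BD_G(s,t)$ with the desired edge bound, completing the argument.

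The main obstacle lives inside Step~1, namely devising (or citing) a polynomial-time $3/2$-approximation for the metric Hamiltonian $s$-$t$ path problem: the naive MST-doubling approach only delivers stretch $2$, and Hoogeveen's adaptation for prescribed endpoints gives $5/3$, so reaching $3/2$ demands either a more careful $T$-join-based construction with the right parity correction at $s$ and $t$ or an appeal to recent LP-based path-TSP techniques. Once this algorithm is in hand, Step~2 is essentially free: \Cref{lm:linear_size_path} sparsifies without increasing the weight, so the stretch survives intact.
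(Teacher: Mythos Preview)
Your proposal is correct and mirrors the paper's approach exactly: reduce to the metric Hamiltonian $s$-$t$ path problem on $\{s,t,r_1,\dots,r_k\}$ with weights $d_G(\cdot,\cdot)$, apply a $3/2$-approximation, re-expand, and then sparsify via \Cref{lm:linear_size_path}. The obstacle you flag in Step~1 is resolved in the paper not by a Christofides-style construction but by citing Zenklusen's $3/2$-approximation for metric $s$-$t$ path TSP~\cite{Zenklusen19}, so no new algorithmic work is needed there.
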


\section{Group Steiner spanners in the singleton case}

In this section we consider the special case in which each group $R_i$ in $\mathcal{R}$ contains a single vertex $r_i$, hence $R = \{r_1, r_2, \dots, r_k\}$.

\subsection{A group Steiner preserver}\label{sec:preserver_singleton}

\Cref{lemma:group_steiner_path_decomposition} implies that union of $k$ shortest-path trees $T_1, \dots, T_k$ of $G$, where $T_i$ is rooted in $r_i$, is a group Steiner preserver of size $O(kn)$.
The size of such a preserver is asymptotically optimal, even when $G$ is unweighted.
To see this, let $k \ge 3$ and consider a graph $G$ consisting of a cycle $\langle r_1, r_2, \dots, r_k, r_1 \rangle$ on the $k$ required vertices, along with $n-k$ additional vertices $v_1, \dots, v_{n-k}$ (see \Cref{fig:lb_singleton}).\footnote{For $k \in {1, 2}$, a trivial lower bound of $\Omega(n) = \Omega(kn)$ clearly holds.} All vertices $v_i$ have $\lfloor k / 3 \rfloor$ incident edges, where the $j$-th such edge is $e_{i,j} = (v_i, r_{1 + 3(j-1)})$. 
Given any $v_i$ and $r_{1+3(j-1)}$, there exists a unique simple path from $v_i$ to $r_{2+3(j-1)}$ spanning $v_i$ and $\{r_1, \dots, r_k\}$, and such a path uses the edge $e_{i,j}$. As a consequence, any group Steiner preserver $H$ must contain all edges $e_{i,j}$, which implies that $H$ must have size $\Omega(kn)$.

\begin{theorem}
    \label{thm:singleton_preserver}
    In the singleton case, it is possible to compute a group Steiner preserver of size $O(kn)$ in polynomial time. Moreover, there are unweighted graphs $G$ such that any group Steiner preserver of $G$ has size $\Omega(kn)$.
\end{theorem}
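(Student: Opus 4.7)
The plan is to establish the upper bound and the matching lower bound separately, both of which are sketched in the discussion preceding the theorem statement.

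For the upper bound, I would take the union $H=\bigcup_{i=1}^k T_i$ of $k$ shortest-path trees, where $T_i$ is rooted at the required vertex $r_i$ (computable via Dijkstra in polynomial time). Clearly $|H|\le k(n-1)=O(kn)$. To verify that $H$ is a group Steiner preserver, I would fix $s,t\in V$, take a shortest group Steiner path $\pi^*$ from $s$ to $t$ in $G$, and invoke \Cref{lemma:group_steiner_path_decomposition} with the indices corresponding to the first occurrences of the required vertices along $\pi^*$. This decomposes $\pi^*$ into at most $k+1$ shortest subpaths between consecutive vertices of $R\cup\{s,t\}$. Every inner subpath has both endpoints in $R$, and the first/last subpaths each have one endpoint in $R$; so each subpath is a shortest path in $G$ with at least one endpoint $r_i$, and therefore lies entirely inside the tree $T_i$, hence inside $H$. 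Concatenating these subpaths in $H$ yields a group Steiner path of length $\BD_G(s,t)$, proving $\BD_H(s,t)=\BD_G(s,t)$.

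For the lower bound, I would handle $k\in\{1,2\}$ by the trivial observation that $H$ must be connected, so $|H|\ge n-1=\Omega(kn)$. For $k\ge 3$ I would use exactly the instance described in the excerpt: take $G$ consisting of the cycle $C=\langle r_1,r_2,\dots,r_k,r_1\rangle$ together with $n-k$ extra vertices $v_1,\dots,v_{n-k}$, each adjacent precisely to the $\lfloor k/3\rfloor$ required vertices of the form $r_{1+3(j-1)}$; call these edges $e_{i,j}$. All edges have weight $1$. The goal is to show that every $e_{i,j}$ must belong to any group Steiner preserver $H$, which immediately gives $|H|\ge(n-k)\lfloor k/3\rfloor=\Omega(kn)$.

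The main obstacle is the mandatoriness argument, and I would prove it by examining the single-pair group Steiner distance $\BD_G(v_i,r_{2+3(j-1)})$. Any group Steiner path from $v_i$ must begin with some edge $e_{i,j'}$ and must touch every required vertex of $C$ before reaching $r_{2+3(j-1)}$. Because the neighbors of $v_i$ are spaced three cycle-steps apart, the three consecutive cycle vertices $r_{1+3(j-1)}$, $r_{2+3(j-1)}$, $r_{3+3(j-1)}$ form a ``local arc'' where only the first is adjacent to $v_i$ and the second is the desired endpoint. A direct cycle-traversal analysis then shows that entering the cycle at $r_{1+3(j-1)}$ and going the ``long way around'' to $r_{2+3(j-1)}$ uses exactly $k$ edges and visits all required vertices without repetition; any other entry $e_{i,j'}$ with $j'\neq j$ would force the walk to either skip the arc containing $r_{2+3(j-1)}$ (and then double back, spending at least two extra edges) or traverse part of the cycle twice, so every alternative path has strictly greater length. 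Hence the unique shortest group Steiner path from $v_i$ to $r_{2+3(j-1)}$ uses $e_{i,j}$, and removing $e_{i,j}$ from $H$ strictly increases $\BD_H(v_i,r_{2+3(j-1)})$, forcing $e_{i,j}\in H$. Taking the union over all $i$ and $j$ yields the claimed $\Omega(kn)$ lower bound.
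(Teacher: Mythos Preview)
Your proof is correct and follows exactly the paper's approach for both bounds: the union of shortest-path trees justified via \Cref{lemma:group_steiner_path_decomposition} for the upper bound, and the cycle-plus-spokes construction with the Hamiltonian-path uniqueness argument for the lower bound. One minor wording point in the upper bound: a specific shortest subpath of $\pi^*$ need not literally ``lie entirely inside'' $T_i$ when shortest paths are not unique, but $T_i$ does contain \emph{some} shortest path from $r_i$ to the other endpoint, which is all you need for the concatenation to have length $\BD_G(s,t)$.
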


\subsubsection*{A corresponding distance oracle}

The above group Steiner preserver can be turned into a group Steiner distance oracle with the same asymptotic size, stretch and building time. In order to do this we maintain an additional complete auxiliary graph $H_R$ on the vertices in $R$.
The weight of a generic edge $(u,v)$ in $H_R$ is $d_G(u,v)$. Notice that $H_R$ can be computed in polynomial time and its size is $O(k^2)=O(kn)$. 
To report $\BD(s,t)$, we augment $H_R$ by adding (i) two new vertices $s'$ and $t'$ along with all edges $(s', r)$ and $(t', r)$ with weight $d_G(s,r)$ and $d_G(t,r)$, respectively, for $r \in R$. Then, we compute a shortest group Steiner path $\pi$ between $s'$ and $t'$ in $H_R$ using the algorithm of \Cref{sec:computing_group_Steiner_paths}, which requires time 
$O(2^k k^3 + k(k^2 + k \log k))=O(2^k k^3)$ since $|R|=k$ and the number of vertices in $H_R$ is $k+2$. We can also report a corresponding group Steiner path in $G$ by replacing each edge of $\pi$ with a shortest path in $G$ between the edges' endvertices (which are stored in the $T_i$'s). This can be done in time proportional to the number of edges of the reported path.

\subsection{A spanner with stretch \texorpdfstring{$1+\varepsilon$}{1+epsilon} and \texorpdfstring{$O(n/\varepsilon^2)$}{O(n/epsilon\^2)} edges}

By the lower bound in~\Cref{thm:singleton_preserver}, any group Steiner $\alpha$-spanner of size $o(kn)$ must have a stretch of $\alpha > 1$. In this section we present an algorithm that builds a group Steiner spanner of linear size and stretch $1+\varepsilon$ in time $2^k k n^{O(1)}$, for every constant $\varepsilon > 0$.
We then show how to reduce the building time of our group Steiner spanner to polynomial at the cost of increasing the stretch to $\gamma+\varepsilon$ by using a $\gamma$-approximation algorithm for the minimum-cost metric Hamiltonian path problem as a black-box.
By keeping the same trade-offs among size, stretch, and building time, we show how to convert both spanners to group Steiner distance oracles with query time $O(1/\varepsilon^2)$.  The pseudocode for constructing our $(1+\varepsilon)$ group Steiner spanner can be found in \Cref{alg:1_plus_eps}.

\begin{algorithm}[!t]
\caption{Our algorithm for computing a group Steiner $(1+\epsilon)$-spanner of a graph $G$ in the singleton case. $R$ denotes the set of required vertices.}
\label{alg:1_plus_eps}

\SetKwFunction{Micro}{PartitionIntoMicroTrees}
\SetKwFunction{Cluster}{Cluster}
\SetKwProg{Fn}{Function}{:}{\relax}
%\DontPrintSemicolon

\tcc{Returns a rooted forest $F$ of $G$ with one tree $T_c$ for each $c \in C$. If vertex $v$ is in $T_c$ then $d_{T_c}(v,c) = d_G(v, c) \le d_G(v, c')$ for   $c' \in C$.}
\Fn{\Cluster{$C$}}{
    $G' \gets $ graph obtained from $G$ by adding a new vertex $s^*$ and all edges $(s^*, c)$ of weight $0$ for $c \in C$\;
    $\widetilde{T} \gets $ a shortest path tree of $G'$ from $s^*$ that contains all edges incident to $s^*$\;
    $F \gets $ the forest obtained from $\widetilde{T}$ by deleting $s^*$ and all its incident edges\;
    \Return $F$\;
}
\BlankLine
\tcc{Returns a partition of a group Steiner tree $T$ of $G$ w.r.t.\ $R$ into a collection $\mathcal{T}$ containing $O(\frac{1}{\epsilon})$ edge-disjoint \emph{micro-trees}.}
\Fn{\Micro{$T$}}{
    $W \gets \frac{\varepsilon}{4} w(T)$\;
    $\mathcal{T} \gets \emptyset$\tcp*{A collection of micro-trees}
    $T' \gets$ tree obtained by rooting $T$ in an arbitrary vertex\;
    \While{$w(T') > W$}{
        $v \gets $ deepest node in $T'$ such that $w(T_v) \ge W$\;
        $u_1, \dots, u_k \gets$ children of $v$ in $T'$, in an arbitrary order\;
        $i \gets$ smallest index such that $\sum_{j=1}^i \big( w(T'_{u_j}) + w(u_j, v) \big) > W$\;
        $T'' \gets $ subtree of $T'$ induced by $v$ and all the vertices in $T_{u_j}$ for $j<i$\;
        $\mathcal{T} \gets \mathcal{T} \cup \{T'', T'_{u_i}\}$\;
        $T' \gets $ tree obtained from $T'$ by deleting all the vertices in $T_{u_j}$ for $j \le i$\;
    }
    \Return $\mathcal{T} \cup \{ T' \}$\;
    }
\BlankLine
$T \gets$ \hbox{a Steiner tree w.r.t.\ $R$ such that~$w(T)$ is at most twice the weight of an} optimal Steiner tree\;
${\mathcal{T}_1, \dots, \mathcal{T}_h} \gets$ \Micro($T$)\;
$\overline{G} \gets $ complete graph with vertex set $R$, the weight of a generic edge $(u,v)$ is $d_G(u,v)$\;
\For{$i \gets 1, \dots, h$}
{
    \For{$j \gets i, \dots, h$}
    {
        $\pi'_{i,j} \gets$ minimum cost path among all Hamiltonian paths in $\overline{G}$  having an endvertex in $\mathcal{T}_i$ and the other endvertex in $\mathcal{T}_j$\;
        $\pi''_{i,j} \gets$ path obtained from $\pi$ by replacing each edge $(u,v)$ with a shortest path from $u$ to $v$ in $G$ (w.r.t.\ the shortest path metric)\;
        $\pi_{i,j} \gets$ sparsify $\pi''_{i,j}$ as shown in \Cref{lm:linear_size_path}\;
    }
}
\BlankLine
\For{$i \gets 1, \dots, h$}
{
    $F_i \gets$ \Cluster{$V(T_i)$}, where $V(T_i)$ denotes the set of vertices in $T_i$\;
}
\BlankLine
\Return $H= T \cup \left( \bigcup_{i=1}^h \bigcup_{j=i}^h \pi_{i,j} \right) \cup \left( \bigcup_{i=1}^h F_i \right)$\;
\end{algorithm}

We start by defining an auxiliary \emph{clustering procedure} that will be useful for describing all our spanner constructions. Given a set of \emph{centers} $C \subseteq V$, the procedure computes a spanning forest $F$ of $G$ with $|C|$ rooted trees with the following properties: (i) the root of each tree is a distinct vertex in $C$, (ii) the unique path in $F$ from a vertex $v \in V$ to the root $c$ of its tree is a shortest path between $v$ and $c$ in $G$, (iii) $c$ is (one of) the closest center(s) to $v$. The procedure first constructs a graph $G'$ which is obtained from $G$ by adding a \emph{dummy source vertex} $s^*$ along a \emph{dummy edge} $(s^*, c)$ of weight $0$ for each $c \in C$. Then, it computes a shortest-path tree $\widetilde{T}$ of $G'$ from $s^*$ that contains all the dummy edges. Finally, it returns the forest $F$ obtained by deleting $s^*$ (and all its incident dummy edges) from $\widetilde{T}$.

We show how to compute our group Steiner spanner $H$. Let $T$ be a Steiner tree of $G$ w.r.t.\ the required vertices $R$ whose total weight $w(T)$ is at most twice that of the optimal Steiner tree $T^*$.
It is known that $T$ can be computed in polynomial time~\cite{Vazirani}.

We subdivide $T$ into $O(\frac{1}{\varepsilon})$ edge-disjoint \emph{micro-trees} $\mathcal{T}_1, \dots, \mathcal{T}_h$ that altogether span all vertices of $T$ and such that each micro-tree is a subtree of $T$ of weight at most $W=\frac{\varepsilon}{4} w(T)$. 

The subdivision is computed by the following iterative procedure, which keeps track of the part $T'$ yet to be divided. Initially, $T'$ is obtained by rooting $T$ in an arbitrary vertex.
As long as $T'$ has weight larger than $W$, we find a node $v$ such that the weight of the subtree  $T'_v$ of $T'$ rooted in $v$ is larger than $W$ and the depth of $v$ in $T'$ is maximized.
Let $u_1, u_2, \dots$ be the children of $v$ in $T'$, and let $i$ be the smallest index such that $\sum_{j=1}^i \big( w(T'_{u_j}) + w(u_j, v) \big) > W$. We create two micro-trees: one consists of the subtree of $T'_v$ induced by $v$ and all the vertices in $T'_{u_1}, \dots, T'_{u_{i-1}}$ (if any), and the other consists of $T'_{u_i}$. Notice that our choice of $v$ and $i$ ensures that both micro-trees have weight at most $W$.
Finally, we delete all vertices in $T'_{u_1}, \dots, T'_{u_i}$ from $T'$ (along with their incident edges) and repeat.

\begin{figure}
    \centering
    \includegraphics[scale=1.2]{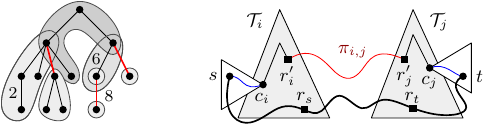}
    \caption{On the left: a tree with edge-weights, where unlabeled edges have weight $1$, and a possible decomposition into micro-trees as computed by our procedure with $W=6$. The edges $(v, u_i)$ are highlighted in red.
    On the right: a qualitative depiction of the paths used in the analysis of the stretch of our group Steiner $(1+\varepsilon)$-spanner. The shortest group Steiner path between $s$ and $t$ is shown in bold, while $\pi_s$ and $\pi_t$ are shown in blue. The white triangles are the trees in $F_i$ and $F_j$ rooted in $c_i$ and $c_j$, respectively.}
    \label{fig:singleton_1_plus_eps}
\end{figure}
 
 We stop the above procedure as soon as $w(T') \le W$, and we choose $T'$ as the last micro-tree of our subdivision (see \Cref{fig:singleton_1_plus_eps}).
  Notice that the edges deleted in each iteration have a total weight of at least $W$, and that each iteration creates at most $2$ micro-trees.
 It follows that the resulting collection contains at most $2 \left( \frac{w(T)}{W} + 1\right) \le \frac{8}{\varepsilon} + 2 = O(\frac{1}{\varepsilon})$ micro-trees.

We then compute a complete graph $\overline{G}$ on the required vertices, where the weight of a generic edge $(u,v)$ is $d_G(u,v)$.
Then, for each unordered pair of (not necessarily distinct) micro-trees $\{\mathcal{T}_i, \mathcal{T}_j\}$, we consider all pairs $(r, r')$ of required vertices such that $r$ is in $\mathcal{T}_i$ and $r'$ is in $\mathcal{T}_j$, compute a minimum-cost Hamiltonian path in $\overline{G}$ between $r$ and $r'$, and we call $\pi_{i,j}$ the shortest of such paths in which each edge of $\overline{G}$ has been replaced by the corresponding shortest path in $G$.
Thanks to \Cref{lm:linear_size_path}, we can assume that each $\pi_{i,j}$ contains at most $O(n)$ edges.

Finally, we compute $h$ forests $F_1, \dots, F_h$, where $F_i$ is obtained by our clustering procedure using the vertices of $\mathcal{T}_i$ as centers. 

Our group Steiner spanner $H$ consists of the union of $T$, all paths $\pi_{i,j}$ for $1 \le i \le j \le h$, and all the edges in $F_i$ for $1 \le i \le h$. The size of $H$ is $O(n/\varepsilon^2)$ as each of (i) $T$, (ii) the $O(\frac{1}{\varepsilon^2})$ paths $\pi_{i,j}$, and (iii) $O(\frac{1}{\varepsilon})$ forests $F_i$, all contain $O(n)$ edges.

To analyze the stretch of $H$, fix a shortest group Steiner path $\pi^*$ between any two vertices $s,t$ in $G$ and let $r_s$ (resp.\ $r_t$) be the first (resp.\ last) required vertex encountered in a traversal of $\pi^*$ from $s$ to $t$.
Let $\mathcal{T}_i$ and $\mathcal{T}_j$ the micro-trees containing $r_s$ and $r_t$, respectively. Moreover, let $r'_i$ and $r'_j$ be the endvertices of $\pi_{i,j}$, where $r'_i$ lies in  $\mathcal{T}_i$ and $r'_j$ lies in $\mathcal{T}_j$.
Finally, let $c_i$ (resp.\ $c_j$) be the root of the tree containing $s$ in $F_i$ (resp.\ $t$ in $F_j$). The situation is depicted in \Cref{fig:singleton_1_plus_eps}. 

Notice that the weight of a minimum-cost group Steiner tree is a lower bound to $w(\pi^*)$, hence $\frac{w(T)}{2} \le \BD_G(s,t)$. Moreover, since $r_s$ is a center of the clustering procedure for $F_i$, we have $d_{F_i}(s,c_i) \le w(\pi^*[s:r_s])$, and symmetrically  $d_{F_i}(c_j,t) \le w(\pi^*[t:r_t])$. Therefore: 
\begin{align}
    \BD_H(s,t) &\le d_{F_i}(s,c_i) + d_{\mathcal{T}_i}(c_i, r'_i) + w(\pi_{i,j}) +  d_{\mathcal{T}_j}(r'_j, c_j) + d_{F_j}(r_j,t) \label{eq:singleton_1_plus_eps_distance_in_H}\\
    & \le w(\pi^*[s:r_s])  +  W  +  w(\pi^*[r'_i: r'_j]) + W   +  w(\pi^*[r_t,t]) \nonumber \\
    & \le  w(\pi^*) + 2 \cdot \frac{\varepsilon}{4} w(T) \nonumber
    \le  \BD_G(s,t) + \varepsilon \BD_G(s,t)=(1+\varepsilon)\BD_G(s,t).
\end{align}

Observe that all steps of the above construction can be carried out in polynomial time, except for the computation of the paths $\pi_{i,j}$, which requires time $2^k k n^{O(1)}$ (see~\Cref{sec:computing_group_Steiner_paths}). Hence, we have the following:

\begin{theorem}
\label{thm:singleton_1_plus_eps}
    In the singleton case, it is possible to compute a group Steiner spanner having stretch $1+\varepsilon$ and size $O(\frac{n}{\varepsilon^2})$ in $2^k k \cdot n^{O(1)}$ time. 
\end{theorem}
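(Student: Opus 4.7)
The plan is to build $H$ from three ingredients: a $2$-approximate Steiner tree $T$ of $R$ in $G$, a small family of precomputed ``shortcut'' group Steiner paths between pieces of $T$, and clustering forests that attach an arbitrary vertex to those pieces. The central parameter is a weight budget $W=\tfrac{\varepsilon}{4}w(T)$ used to chop $T$ into $O(1/\varepsilon)$ edge-disjoint micro-trees $\mathcal{T}_1,\dots,\mathcal{T}_h$, each of weight at most $W$, via a greedy rooted-tree sweep: repeatedly grab the deepest subtree of weight exceeding $W$ and peel off a heavy-enough prefix of its children, which halts after $O(w(T)/W)=O(1/\varepsilon)$ rounds since each round removes at least $W$ total weight from the remaining tree.

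For each unordered pair $\{\mathcal{T}_i,\mathcal{T}_j\}$ I would precompute a shortcut $\pi_{i,j}$ realising a shortest group Steiner path in $G$ whose endpoints lie in $V(\mathcal{T}_i)$ and $V(\mathcal{T}_j)$, sparsified to $O(n)$ edges via \Cref{lm:linear_size_path}. For each $i$, I would also compute a clustering forest $F_i$ that, via a single shortest-path-tree computation on an auxiliary super-source attached to $V(\mathcal{T}_i)$, connects every vertex of $G$ to its nearest member of $V(\mathcal{T}_i)$ along a shortest path in $G$. Setting $H=T\cup\bigcup_{i\le j}\pi_{i,j}\cup\bigcup_i F_i$ gives the size bound immediately: $T$, each $F_i$, and each $\pi_{i,j}$ all have $O(n)$ edges, and there are $O(1/\varepsilon)$ forests and $O(1/\varepsilon^2)$ shortcuts, totalling $O(n/\varepsilon^2)$.

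For the stretch I would fix an optimal group Steiner path $\pi^*$ from $s$ to $t$, identify its first and last required vertices $r_s,r_t$, and the micro-trees $\mathcal{T}_i,\mathcal{T}_j$ containing them. I then route in $H$ through five legs: $s\to c_i$ in $F_i$, $c_i\to r'_i$ inside $\mathcal{T}_i$, $r'_i\to r'_j$ along $\pi_{i,j}$, $r'_j\to c_j$ inside $\mathcal{T}_j$, and $c_j\to t$ in $F_j$. The clustering property of $F_i,F_j$ bounds the first and last legs by $w(\pi^*[s:r_s])$ and $w(\pi^*[r_t:t])$ (since $r_s,r_t$ are valid candidate centres), the micro-tree weight bound $W$ covers the two internal walks inside $\mathcal{T}_i$ and $\mathcal{T}_j$, and the middle leg is bounded by the cost of $\pi^*$ between its first and last required vertices, because that sub-path is itself a feasible competitor for $\pi_{i,j}$. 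Summing gives $\BD_H(s,t)\le w(\pi^*)+2W$, and the Steiner $2$-approximation $w(T)\le 2\,w(T^*)\le 2\,\BD_G(s,t)$ converts $2W=\tfrac{\varepsilon}{2}w(T)$ into $\varepsilon\,\BD_G(s,t)$.

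All ingredients except the shortcuts are polynomial-time; the bottleneck is the $O(1/\varepsilon^2)$ single-pair shortest group Steiner computations, each costing $2^k k\cdot n^{O(1)}$ via the procedure of \Cref{sec:computing_group_Steiner_paths}, which yields the claimed FPT bound. I expect the main subtlety to be tuning the micro-tree granularity: $W$ must be small enough that $2W\le\varepsilon\,\BD_G(s,t)$ in the worst case (which is what forces the $1/w(T)$ scaling and uses the Steiner $2$-approximation to relate $w(T)$ to $\BD_G(s,t)$), yet large enough that the number of pieces, and hence the number of shortcuts, stays $O(1/\varepsilon)$. The choice $W=\tfrac{\varepsilon}{4}w(T)$ balances these two requirements simultaneously.
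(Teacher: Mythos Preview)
Your proposal is correct and follows essentially the same construction and analysis as the paper: the $2$-approximate Steiner tree $T$, the micro-tree decomposition with budget $W=\tfrac{\varepsilon}{4}w(T)$, the $O(1/\varepsilon^2)$ precomputed shortcut paths $\pi_{i,j}$, the clustering forests $F_i$, and the five-leg routing argument with the bound $w(\pi^*)+2W\le(1+\varepsilon)\BD_G(s,t)$ are all exactly what the paper does. The only cosmetic difference is that the paper phrases the shortcuts as minimum-cost Hamiltonian paths in the metric closure $\overline{G}$ on $R$ (with endpoints restricted to required vertices in the two micro-trees) rather than as shortest group Steiner paths with endpoints in $V(\mathcal{T}_i)\times V(\mathcal{T}_j)$, but in the singleton case these formulations coincide.
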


To obtain a polynomial building time, we can redefine each $\pi_{i,j}$ starting from a $\gamma$-approximation of minimum-cost Hamiltonian path between (a vertex of) $\mathcal{T}_i$ and (a vertex of) $\mathcal{T}_j$ in $\overline{G}$. We use $w(\pi_{i,j}) \le \gamma w(\pi^*[r'_i,r'_j])$ in~\Cref{eq:singleton_1_plus_eps_distance_in_H} to show that 
\begin{align*}
    \BD_H(s,t) &
    \le w(\pi^*[s:r_s])  +  W  +  \gamma w(\pi^*[r'_i: r'_j]) + W   +  w(\pi^*[r_t,t]) \\
    & \le \gamma w(\pi^*) + 2 \cdot \frac{\varepsilon}{4} w(T)
    \le  \gamma \BD_G(s,t) + \varepsilon \BD_G(s,t)=(\gamma+\varepsilon)\BD_G(s,t).
\end{align*}

We can then state the following:

\begin{theorem}
\label{thm:singleton_gamma_plus_eps}
In the singleton case, it is possible to compute a group Steiner spanner having stretch $\gamma+\varepsilon$ and size $O(\frac{n}{\varepsilon^2})$ in polynomial time, where $\gamma$ is the approximation ratio for the  minimum-cost metric Hamiltonian path problem. 
\end{theorem}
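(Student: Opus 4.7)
The plan is to reuse, essentially verbatim, the construction behind \Cref{thm:singleton_1_plus_eps}, replacing only the single step responsible for its $2^k k\, n^{O(1)}$ running time. Concretely, I would still compute a $2$-approximate Steiner tree $T$ of $G$ w.r.t.\ $R$, decompose it into the same $h = O(1/\varepsilon)$ edge-disjoint micro-trees $\mathcal{T}_1, \dots, \mathcal{T}_h$ of weight at most $W = \frac{\varepsilon}{4} w(T)$, construct the complete auxiliary graph $\overline{G}$ on $R$ with edge weights $d_G(\cdot,\cdot)$, and build the $h$ clustering forests $F_1, \dots, F_h$ with the \textsc{Cluster} subroutine. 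The only change is the recipe for each $\pi_{i,j}$: rather than solving the minimum-cost Hamiltonian path problem exactly between required vertices of $\mathcal{T}_i$ and $\mathcal{T}_j$ on $\overline{G}$, I would invoke the assumed polynomial-time $\gamma$-approximation, take the shortest such approximate path across all admissible endpoint pairs in $\mathcal{T}_i \times \mathcal{T}_j$, lift each of its edges to a shortest path in $G$, and then apply \Cref{lm:linear_size_path} to sparsify. The returned spanner is $H = T \cup \bigl(\bigcup_{i \le j} \pi_{i,j}\bigr) \cup \bigl(\bigcup_i F_i\bigr)$.

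The size and running time bounds then come for free. Exactly as in \Cref{thm:singleton_1_plus_eps}, $T$ contributes $O(n)$ edges, each of the $O(1/\varepsilon^2)$ paths $\pi_{i,j}$ contributes $O(n)$ edges by \Cref{lm:linear_size_path}, and each of the $O(1/\varepsilon)$ forests $F_i$ contributes $O(n)$ edges, summing to $O(n/\varepsilon^2)$. Every remaining subroutine -- Steiner-tree $2$-approximation, the greedy micro-tree decomposition, all-pairs shortest paths to populate $\overline{G}$, the clustering procedure, the path sparsification, and the Hamiltonian-path approximation -- runs in polynomial time, so the overall construction does too.

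For the stretch, the only inequality from the analysis of \Cref{thm:singleton_1_plus_eps} that needs revisiting is the bound on $w(\pi_{i,j})$. I would fix $s,t \in V$ and an optimal group Steiner path $\pi^*$ from $s$ to $t$, and let $r_s, r_t$ be its first and last required vertex, lying in some $\mathcal{T}_i, \mathcal{T}_j$ respectively. Because we are in the singleton case, the subpath $\pi^*[r_s:r_t]$ visits every required vertex and, after extracting the subsequence of $R$-occurrences and invoking the triangle inequality in $\overline{G}$, it witnesses a feasible Hamiltonian path in $\overline{G}$ between endpoints of $\mathcal{T}_i \times \mathcal{T}_j$ of cost at most $w(\pi^*[r_s:r_t])$. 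The $\gamma$-approximation therefore gives $w(\pi_{i,j}) \le \gamma \cdot w(\pi^*[r_s:r_t])$, which replaces the factor-$1$ bound used before. Plugging this into \Cref{eq:singleton_1_plus_eps_distance_in_H}, using $\gamma \ge 1$ to absorb the boundary pieces $w(\pi^*[s:r_s]) + w(\pi^*[r_t:t])$ into $\gamma\, w(\pi^*)$, and recalling $\frac{w(T)}{2} \le \BD_G(s,t)$, yields $\BD_H(s,t) \le \gamma\, w(\pi^*) + 2W \le (\gamma+\varepsilon)\, \BD_G(s,t)$. The only conceptual care-point is the step ``subpath of $\pi^*$ visiting all of $R$'' $\Rightarrow$ ``feasible Hamiltonian path on $\overline{G}$'', which relies precisely on the singleton assumption (so that ``visits each $R_\ell$'' means ``visits the specific point $r_\ell$'') together with the metric closure encoded in $\overline{G}$; everything else is a direct analog of the previous theorem.
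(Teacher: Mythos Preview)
Your proposal is correct and follows essentially the same approach as the paper: reuse the construction of \Cref{thm:singleton_1_plus_eps} verbatim, replacing the exact minimum-cost Hamiltonian-path computation by a polynomial-time $\gamma$-approximation, and then rerun the stretch analysis with $w(\pi_{i,j}) \le \gamma\, w(\pi^*[r_s:r_t])$ in place of the exact bound, using $\gamma \ge 1$ to absorb the two boundary subpaths into $\gamma\,w(\pi^*)$.
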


If $G$ is weighted, we can choose $\gamma =\frac{3}{2}$ \cite{Zenklusen19}, while if $G$ is unweighted (i.e., $\overline{G}$ is the metric closure of an unweighted graph) we can choose $\gamma=\frac{7}{5} + \delta$ \cite{SeboV14,TraubVZ20}, for any constant $\delta > 0$.

\subparagraph*{A corresponding distance oracle.} Here we show how to transform the above spanners into group Steiner distance oracles with the same size, stretch and building time. The oracles will be able to answer distance queries in $O(1/\varepsilon^2)$ time and report a corresponding path with an additional time proportional to the number of the path's edges.

We only discuss how to build the oracle with stretch $1+\varepsilon$ since the version with stretch $\gamma+\varepsilon$ is analogous.
Our oracle explicitly maintains all the $O(1/\varepsilon^2)$ paths $\pi_{i,j}$ and their lengths, all the $h=O(1/\varepsilon)$ micro-trees $\mathcal{T}_1,\dots,\mathcal{T}_h$ and the forests $F_1, \dots, F_h$. Moreover, for each micro-tree $\mathcal{T}_i$, we maintain a linear-size data structure  that, given any two vertices $a$ and $b$ of $\mathcal{T}_i$, returns the length of the (unique) path in $\mathcal{T}_i$ between $a$ and $b$ in constant time, and the corresponding path in time proportional to its number of edges.\footnote{For instance, such a data structure can be implemented by rooting $\mathcal{T}_i$ at an arbitrary vertex and using the \emph{least-common-ancestor} data structure in~\cite{BenderF02}.}
Notice that the size of our oracle is $O(n/\varepsilon^2)$. 

To answer a distance query for the vertices $s$ and $t$, we look at all pairs of micro-trees $\mathcal{T}_i,\mathcal{T}_j$. 
For each such pair, arguments analogous to the ones used in the analysis of the stretch factor of the spanner, show that there exists a group Steiner path in $H$ of length
\[
    \ell_{i,j}(s,t) = d_{F_i}(s, c_i) + d_{\mathcal{T}_i}(c_i, r'_i) + w(\pi_{i,j}) +  d_{\mathcal{T}_j}(r'_j, c_j) + d_{F_j}(c_j, t),
\]
where $c_i$ (resp.\ $c_j$) is the root of the tree containing $s$ in $F_i$ (resp.\ $t$ in $F_j$) and $r'_i, r'_j$ are the endpoints of $\pi_{i,j}$. Notice that $\ell_{i,j}(s,t)$ can be evaluated in constant time.
We return $\min_{1 \le i \le j \le h} \ell_{i,j}(s,t)$, which is guaranteed to be at most $(1+\varepsilon) d_{G}(s,t)$ by~\Cref{eq:singleton_1_plus_eps_distance_in_H}.
Once the pair $i,j$ minimizing $\ell_{i,j}(s,t)$ is known, we can also report a group Steiner path of length $\ell_{i,j}(s,t)$ in constant time per path edge by navigating $F_i$ (resp.\ $F_j$) from $s$ (resp.\ $t$) to the root of its tree, and by querying our data structures for the micro-trees $\mathcal{T}_i$ and $\mathcal{T}_j$.

\subsection{A tree spanner with a tight stretch of \texorpdfstring{$2$}{2}}\label{sec:2_spanner_singleton}

\begin{figure}
    \centering
    \includegraphics[width=.6\linewidth]{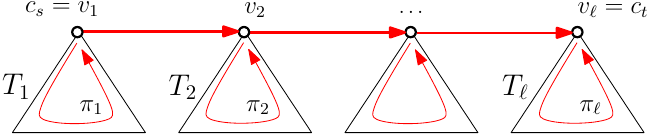}
    \caption{A qualitative depiction of the path constructed in the proof of \Cref{lemma:stretch_2_group_steiner_path_in_T}. The path in bold is the unique path $\pi$ from $c_s$ to $c_t$ in $T$. Each $\pi_i$ is the Eulerian tour of the corresponding tree $T_i$, and the final path is in red.}
    \label{fig:stretch-2-group-steiner-path-in-tree}
\end{figure}

We now describe how to obtain group Steiner \emph{tree spanner} with stretch $2$ (and $n-1$ edges).

We first compute a complete graph $\overline{G}$ on the required vertices, where the weight of a generic edge $(u,v)$ is $d_G(u,v)$. We then compute an MST $M$ of $\overline{G}$, and we construct a subgraph $\widetilde{M}$ of $G$ by replacing each edge $(u,v)$ in $M$ with a shortest path between $u$ and $v$ in $G$. Finally, we select any spanning tree $T$ of $\widetilde{M}$. Our spanner $H$ is obtained as the union of $T$ with the forest $F$ computed by our clustering procedure using the vertices in $T$ as centers.

Notice that $H$ has $n-1$ edges since it is a spanning tree of $G$. We now show that the stretch factor of $H$ is at most $2$. Consider any pair of vertices $s,t$ and let $c_s$ and $c_t$ be the roots of the trees of $F$ containing $s$ and $t$, respectively (notice that $c_s$ and $c_t$ might coincide).

\begin{lemma}\label{lemma:stretch_2_group_steiner_path_in_T}
There exists a path between $c_s$ and $c_t$ in $T$ that traverses all vertices of $T$ and has length $2w(T) - d_T(c_s, c_t)$.
\end{lemma}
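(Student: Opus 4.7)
\textbf{Proof plan for \Cref{lemma:stretch_2_group_steiner_path_in_T}.} The plan is to exploit the tree structure of $T$ by constructing an explicit walk from $c_s$ to $c_t$ that follows the unique tree path between them, while taking detours to visit every vertex off that path. First, I would let $\pi = \langle c_s = x_0, x_1, \ldots, x_p = c_t \rangle$ be the unique simple path from $c_s$ to $c_t$ in $T$, whose length is $d_T(c_s, c_t)$ by definition. Removing the $p$ edges of $\pi$ from $T$ breaks $T$ into a collection of vertex-disjoint subtrees $T_0, T_1, \ldots, T_p$, where each $T_i$ is attached to the path $\pi$ precisely at the vertex $x_i$ (i.e., $x_i \in V(T_i)$, and each vertex of $T$ lies in exactly one $T_i$). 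Consequently, every edge of $T$ either belongs to $\pi$ or belongs to exactly one $T_i$, which gives the decomposition $w(T) = d_T(c_s, c_t) + \sum_{i=0}^{p} w(T_i)$.

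Next, for each $i$, I would construct a closed walk $\pi_i$ in $T_i$ starting and ending at $x_i$ that visits every vertex of $T_i$ and has total weight exactly $2 w(T_i)$. This is standard: duplicate every edge of $T_i$ to obtain an Eulerian multigraph on $V(T_i)$ (every vertex has even degree because each original edge is doubled) and take an Eulerian circuit from $x_i$; equivalently, perform a depth-first traversal of $T_i$ rooted at $x_i$, traversing every edge once going down and once going up. Either construction yields $w(\pi_i) = 2w(T_i)$ and visits all of $V(T_i)$.

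Finally, I would splice these closed walks into $\pi$ to form the desired walk: start at $c_s = x_0$, perform the detour $\pi_0$ (returning to $x_0$), traverse the edge $(x_0, x_1)$ of $\pi$, perform $\pi_1$ at $x_1$, traverse $(x_1, x_2)$, and so on, ending with $\pi_p$ at $c_t = x_p$ (see \Cref{fig:stretch-2-group-steiner-path-in-tree}). By construction this walk begins at $c_s$ and ends at $c_t$, and its vertex set is $\bigcup_i V(T_i) = V(T)$. Summing the weights gives
\[
d_T(c_s, c_t) + \sum_{i=0}^{p} 2 w(T_i) = d_T(c_s, c_t) + 2 \bigl( w(T) - d_T(c_s, c_t) \bigr) = 2 w(T) - d_T(c_s, c_t),
\]
as claimed. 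There is no real obstacle here; the only point requiring care is verifying that the subtree decomposition obtained by deleting the edges of $\pi$ covers all vertices exactly once and all non-$\pi$ edges exactly once, so that the Eulerian-detour weights correctly add up to $2(w(T) - d_T(c_s, c_t))$.
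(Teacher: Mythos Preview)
Your proposal is correct and follows essentially the same approach as the paper: both take the unique $c_s$--$c_t$ path in $T$, remove its edges to obtain subtrees hanging off each path vertex, take an Eulerian tour of each subtree (traversing every non-path edge twice), and splice these tours together along the path edges. Your write-up even makes the weight computation more explicit than the paper's, but the construction is identical.
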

\begin{proof}
    Let $\pi = \langle c_s = v_1, v_2, \dots, v_\ell = c_t \rangle$ be the unique path from $c_s$ to $c_t$ in $T$ (if $c_s = c_t$ then $\pi = \langle c_s \rangle$),
    and call $T_i$ the unique tree containing $v_i$ in the forest obtained from $T$ by deleting the edges of $\pi$.
    Let $\pi_i$ be an Eulerian tour of $T_i$ that starts and ends in $v_i$ and observe that $\pi_i$ traverses each edge in $T_i$ twice.
    The sought path is obtained by joining all tours $\pi_1, \dots, \pi_\ell$ with the edges of $\pi$, i.e., it is the path $\pi_1 \circ (v_1, v_2) \circ \pi_2 \circ (v_2, v_3) \circ \dots \circ (v_{\ell-1}, v_\ell) \circ \pi_\ell$ (see \Cref{fig:stretch-2-group-steiner-path-in-tree}).
\end{proof}

Using the above lemma, we consider the path from $s$ to $t$ in $H$ consisting of the composition of (i) the unique path $\pi_s$ from $s$ to $c_s$ in $F$, (ii) a path $\pi$ from $c_s$ to $c_t$ of length at most $2w(T)$ that traverses all vertices in $T$ (whose existence is guaranteed by \Cref{lemma:stretch_2_group_steiner_path_in_T}), and (iii) the unique path $\pi_t$ from $c_t$ to $t$.

We now argue that the length of the above path is at most $2 \BD(s,t)$. Let $\pi^*$ be the optimal group Steiner path from $s$ to $t$ and let $r_1^*$ and $r_k^*$ be the first and the last occurrences of a required vertex in $\pi^*$, respectively. Since $r_1^*$ is a vertex of $T$, the length of $\pi^*[s:r^*_1]$ is at least $w(\pi_s)$. Similarly, the length of $\pi^*[r^*_k: t]$ is at least $w(\pi_t)$.
Moreover, observe that the weight of $M$ is a lower bound for the length of $\pi^*[r^*_1: r^*_k]$, and hence $w(T) \le  w(M) \le w(\pi^*[r^*_1: r^*_k])$. Therefore:
\begin{align}
 \BD_H(s,t) 
 &\le w(\pi_s) + w(\pi) + w(\pi_t)
 = d_{F}(s, c_s) + 2w(T) - d_T(c_s, c_t) + d_{F}(c_t, t) \label{eq:stretch_2_distance}\\
 & \le w(\pi^*[s:r^*_1]) + 2w(T)  + w(\pi^*[r^*_k: t]) \nonumber \\
 & \le w(\pi^*[s:r^*_1]) + w(\pi^*[r^*_1:r^*_k]) + w(\pi^*[r^*_k: t]) + w(T) \nonumber \\
 &\le w(\pi^*) + w(T) 
 \le 2 w(\pi^*) = 2 \BD_G(s,t). \nonumber
\end{align}

\begin{theorem}
    \label{thm:singleton_stretch_2}
    In the singleton case, it is possible to compute a group Steiner tree spanner having stretch $2$ and $n-1$ edges in polynomial time.
\end{theorem}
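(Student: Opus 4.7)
The plan is to build $H$ as a spanning tree of $G$ that embeds a sparse skeleton covering all $k$ required vertices, and to pay a stretch of $2$ by walking this skeleton (with an Eulerian-tour argument) whenever a query pair $(s,t)$ needs to pick up every required vertex. Since the lower bound mentioned in the introduction forces stretch at least $2$ as soon as the size drops to $n{-}1$ edges, I have essentially no slack and must aim directly at the tight constant.

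I would first compress $R=\{r_1,\dots,r_k\}$ into a small auxiliary graph: take the metric closure $\overline{G}$ on $R$ with $d_G(\cdot,\cdot)$-weighted edges, compute an MST $M$ of $\overline{G}$, and expand $M$ back into $G$ by replacing each edge with a shortest $G$-path, obtaining a connected subgraph $\widetilde{M}$ of weight at most $w(M)$. Let $T$ be any spanning tree of $\widetilde{M}$; by construction $w(T)\le w(M)$ and $R\subseteq V(T)$. I would then extend $T$ to a spanning tree of $G$ by invoking the clustering routine \textsc{Cluster} defined earlier with $V(T)$ as centers: the returned forest $F$ attaches every vertex $v\notin V(T)$ to a nearest vertex of $T$ along a shortest $G$-path. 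Setting $H:=T\cup F$ gives a spanning tree of $G$, hence exactly $n-1$ edges, and the whole construction clearly runs in polynomial time.

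For the stretch bound, fix $s,t\in V$ and let $c_s,c_t\in V(T)$ be the cluster roots of $s$ and $t$ in $F$. My key combinatorial tool is a standard tree-doubling argument: for every pair $c_s,c_t$ there is a walk $\pi$ in $T$ from $c_s$ to $c_t$ of length at most $2w(T)-d_T(c_s,c_t)\le 2w(T)$ that visits every vertex of $T$, and hence every required vertex. Concretely I would remove the $c_s$-to-$c_t$ path from $T$, take doubled Eulerian tours of the hanging subtrees, and splice them together along the removed path. The corresponding $s$-$t$ walk in $H$ is then $\pi_s\circ\pi\circ\pi_t$, where $\pi_s,\pi_t\subseteq F$ are the unique tree-paths from $s$ to $c_s$ and from $c_t$ to $t$, and by construction this walk is a group Steiner path.

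Finally, I would compare this walk to an optimal group Steiner path $\pi^*$ with first and last required vertices $r_1^*,r_k^*$. Three short inequalities close the argument: (i) since $c_s$ is a nearest $T$-vertex to $s$ and $r_1^*\in V(T)$, we get $w(\pi_s)\le d_G(s,r_1^*)\le w(\pi^*[s{:}r_1^*])$, and symmetrically for $\pi_t$; (ii) the MST property gives $w(T)\le w(M)\le w(\pi^*[r_1^*{:}r_k^*])$, because the ordered sequence of required vertices visited along $\pi^*$ induces a spanning tree of $\overline{G}$ whose weight is bounded by the $G$-length of that subpath; (iii) summing the three pieces yields $\BD_H(s,t)\le w(\pi^*)+w(T)\le 2\BD_G(s,t)$. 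I expect inequality (ii) to be the delicate one to phrase cleanly, because it is where the MST property has to be pitted against the cost of traversing all of $R$ inside $\pi^*$; once that is nailed down, the rest is assembly.
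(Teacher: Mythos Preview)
Your proposal is correct and follows essentially the same approach as the paper: build $T$ from an MST of the metric closure on $R$, extend to a spanning tree via the clustering forest $F$, and bound the stretch by the tree-doubling walk on $T$ combined with the three inequalities you list. Even your flagged ``delicate'' step (ii) is argued the same way in the paper, via the spanning subgraph of $\overline{G}$ induced by the required vertices along $\pi^*[r_1^*{:}r_k^*]$.
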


We can prove that the stretch of the above tree spanner is tight, since its stretch cannot be improved even for the single-source case. 

\begin{theorem}
\label{thm:singleton_lb_single_source}
    In the singleton case, there are unweighted graphs $G$ such that any single-source group Steiner spanner of $G$ having stretch strictly smaller than $2-\frac{2}{k}$ must contain at least $n$ edges.
\end{theorem}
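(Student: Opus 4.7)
The plan is to exhibit, for every $k \ge 2$, an unweighted graph $G$ with $n = k+1$ vertices such that no spanning tree of $G$ (which has exactly $n-1$ edges) can serve as a single-source group Steiner spanner with stretch strictly less than $2 - \frac{2}{k}$. The construction I would use is as minimal as possible: take the cycle on the $k$ required vertices $r_1, \dots, r_k$ (with the $k$ cycle edges) together with a pendant source $s$ attached to $r_1$, so $G$ has $k+1$ vertices and $k+1$ edges, and each group $R_i$ equals $\{r_i\}$.

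First I would pin down the relevant distance in $G$ by showing $\BD_G(s, r_2) = \BD_G(s, r_k) = k$. Since $s$ has degree one, every group Steiner walk from $s$ must begin with the edge $(s, r_1)$, and what follows is a walk in the $k$-cycle from $r_1$ to $r_2$ that visits all $k$ cycle vertices. Any connected walk on $k$ distinct vertices uses at least $k-1$ edges, and the walk $r_1 \to r_k \to r_{k-1} \to \dots \to r_2$ attains this bound, so $\BD_G(s, r_2) = 1 + (k-1) = k$; the case $r_k$ is symmetric.

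Next I would analyze any candidate spanning-tree spanner $H$ of $G$. Since $G$ has $k+1$ edges, $H$ is obtained by removing exactly one edge, which cannot be the bridge $(s, r_1)$ and must therefore be a cycle edge. Because every non-source vertex of $G$ is required, any group Steiner walk in $H$ from $s$ to any target $t$ must visit \emph{every} vertex of $H$. I would then invoke the standard Euler-tour identity for trees: under this constraint, the shortest such walk in $H$ has length exactly $2|E(H)| - d_H(s,t) = 2k - d_H(s,t)$. The lower bound is edge-by-edge: each edge off the $s$-to-$t$ path in $H$ separates a nonempty set of required vertices from both $s$ and $t$, so it is crossed an even number of times $\ge 2$; each edge on the $s$-to-$t$ path is crossed an odd number of times $\ge 1$. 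The matching upper bound is a DFS tour of $H$ that short-circuits its final leg to $t$.

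To conclude, I would observe that removing a single cycle edge leaves at least one of $(r_1, r_2)$ and $(r_k, r_1)$ intact in $H$, so at least one of $r_2, r_k$ is adjacent to $r_1$ in $H$ and hence at tree distance $2$ from $s$. Choosing such a $t \in \{r_2, r_k\}$, the Euler-tour identity yields $\BD_H(s, t) = 2k - 2$, so the stretch realized on the pair $(s,t)$ is at least $(2k-2)/k = 2 - 2/k$. Hence $H$ cannot achieve stretch strictly smaller than $2 - 2/k$, and any single-source group Steiner spanner beating this stretch must contain at least $n$ edges. The step I expect to need the most care is the Euler-tour identity: I would double-check the parity argument for the lower bound and confirm that the ``visit every vertex'' hypothesis really applies here (it does, because the pendant $s$ is the \emph{only} non-required vertex of $G$, and our chosen $t$ is required and adjacent to $r_1$ in $H$).
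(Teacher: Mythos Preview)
Your proof is correct. The paper follows the same core idea—a $k$-cycle on the required vertices whose removal of any edge forces stretch $2-\tfrac{2}{k}$—but with two simplifications. First, it takes the source to be $r_1$ itself (a required vertex on the cycle) and uses the pair $(s,s)$ as witness: the shortest group Steiner walk from $r_1$ back to $r_1$ in $G$ is just the cycle of length $k$, while after deleting any cycle edge one must traverse each of the remaining $k-1$ cycle edges twice, giving $2(k-1)$. This sidesteps your Euler-tour identity entirely. Second, the paper attaches $n-k$ pendant vertices (not just one) to an arbitrary required vertex, so the lower bound holds for every $n\ge k$; your construction fixes $n=k+1$, which suffices for the theorem as stated but yields a less flexible family. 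Your Euler-tour parity argument is sound (it works precisely because the only non-required vertex is the pendant $s$, which the walk visits as its starting point), so the extra work is not wasted—just unnecessary once one notices that the self-pair $(s,s)$ is available.
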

\begin{proof}
    To prove our lower bound consider a graph $G$ consisting of a cycle $C$ on the $k$ required vertices $r_1, \dots, r_k$, plus $n-k$ additional vertices $v_1, \dots, v_{n-k}$ connected to an arbitrary required vertex $r$ via the edges in $F = \{(v_1, r), \dots, (v_{n-k}, r)\}$, and let $r_1$ be the source vertex (see \Cref{fig:lb_singleton}). 
    Clearly all single-source group Steiner spanners of $G$ need to contain all edges in $F$, since otherwise they would be disconnected.
    Consider now any subgraph $H$ obtained from $G$ by deleting a generic edge $e$ of the cycle. 
    Observe that the shortest group Steiner path from $s$ to itself in $H$ requires traversing each edge $C - e$ twice, hence $\BD_H(s,s) = 2(k-1)=2k-2$.
    Since the shortest group Steiner path in $G$ from $s$ to itself has length $k$,  the stretch factor of $H$ is at least $\frac{2k-2}{k} = 2 - \frac{2}{k}$.
\end{proof}

\begin{figure}
    \centering
    \includegraphics[scale=0.7]{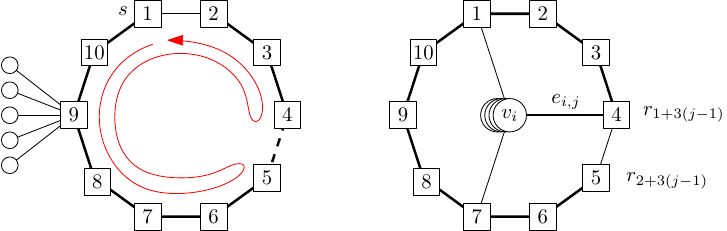}
    \caption{The lower bound constructions of \Cref{thm:singleton_lb_single_source} (left) and \Cref{thm:singleton_preserver} (right) with $k=10$ required vertices. The generic required vertex $r_i$ is depicted as a squared labelled with $i$.}
    \label{fig:lb_singleton}
\end{figure}

\subsubsection*{A corresponding distance oracle}

The above group Steiner $2$-spanner can be turned into a distance oracle  with constant query time by storing $F$ and, for vertex $v \in V$, the distance $d_F(v,c_v)$ from $v$ to the root $c_v$ of its tree in $F$. We also store $T$ and $w(T)$, along with a data structure that can report distances in $T$ in constant time, and the associated paths in an additional time proportional to the number of returned edges. 

To answer a distance query we compute $d_{F}(s, c_s) + 2w(T) - d_T(c_s, c_t) + d_{F}(c_t, t)$ (see~\Cref{eq:stretch_2_distance}) in constant time.
To report the associated path we return $\pi_s \circ \pi \circ \pi_t$ as defined above, where $\pi_s$ (resp.\ $\pi_t$) can be found by navigating $F$ from $s$ (reps.\ $t$) towards the root of its tree, and $\pi$ can be found by a suitable DFS visit of $T$ from $c_s$ that prioritizes edges that are not in the unique path from $c_s$ to $c_t$ in $T$ (see the proof of \Cref{lemma:stretch_2_group_steiner_path_in_T}). This can be done time proportional in the number of edges of $\pi_s \circ \pi \circ \pi_t$.

\section{Group Steiner spanners for general group sizes}
\label{sec:group_steiner_spanner_general}

We now describe how to obtain an all-pairs group Steiner spanner $H$ with stretch factor $2\alpha + 1$, for each $\alpha \geq 1$. We will provide two different constructions that build different spanners of sizes $kn+|\bigcup_i R_i \times R_i\ \alpha\text{-spanner}|$ and $n+|R\times R\ \alpha\text{-spanner}|$, respectively. The first construction is preferable over the second one when the $k$ groups are somewhat disjoint and of uniform sizes, i.e., $|R_i|=O(|R|/k)$ for all $i$. We also provide corresponding distance oracles having query a time of $O(2^k k \cdot |R|^2+|R|^3)$.

\subsection{The first construction.} The first construction is the following.
For each group $R_i \subseteq V$, we build the spanner $H$ as the union of a \emph{subsetwise} spanner $R_i \times R_i$ with stretch $\alpha$, plus $k$ spanning forests $F_1, \dots, F_k$, where each $F_i$ is obtained by our clustering procedure using the vertices in $R_i$ as centers.

We now discuss the stretch factor of $H$.
Fix any two vertices $s,t \in V$, and, without loss of generality, assume that the shortest group Steiner path $\pi^*$ from $s$ to $t$ in $G$ traverses the groups $R_1,\dots,R_k$ in this order (otherwise, re-index the groups accordingly), and let $r_i$ be the first required vertex in $R_i$ reached by $\pi^*$ (see \Cref{fig:2alpha+1_spanner}).

Let $r'_1$ (resp. $r'_{k+1}$) be the root of the tree in $F_1$ (resp. $F_k$) containing $s$ (resp. $t$) and let $\pi_1$ (resp. $\pi_{k+1}$) the unique path in $F_1$ (resp. $F_k$) between $s$ and $r'_1$ (resp. between $t$ and $r'_{k+1}$). Moreover, for each $i=2,\dots,k$, let $r'_i$ be the root of the tree in $F_i$ containing $r_{i-1}$, and let $\pi_i$ be the corresponding path in $F_i$ between them.
Finally, denote by $\pi'_i$ the path in $H$ between $r'_i$ and $r_i$, for $i=1,\dots,k$, and by $\pi'_{k+1}$ the path in $H$ between $r_k$ and $r'_{k+1}$.
Consider now the group Steiner path $\widetilde{\pi}$ in $H$ made by the concatenation of $\pi_1 \circ \pi'_1 \circ \dots \circ \pi_k \circ \pi'_k \circ \pi'_{k+1} \circ \pi_{k+1}$ (see \Cref{fig:2alpha+1_spanner}).
We now show that $w(\widetilde{\pi}) \le (2\alpha+1) w(\pi^*)$. 

For technical convenience, we let $r_0 = s$ and $r_{k+1} = t$ and we notice that, for $i = 1, \dots, k+1$, we have $w(\pi_i) \le w(\pi^*[r_{i-1} : r_i])$. 

For $i=1, \dots, k$, $w(\pi'_i) \le \alpha d_G(r'_i, r_i) \le \alpha ( w(\pi_i) +  w(\pi^*[r_{i-1} : r_i]) ) \le 2 \alpha  w(\pi^*[r_{i-1} : r_i]) $.
Finally, $w(\pi'_{k+1}) \le \alpha d_G(r_k, r'_{k+1}) \le \alpha (w(\pi^*[r_k : r_{k+1}]) + w(\pi_{k+1}) ) \le 2 \alpha w(\pi^*[r_k : r_{k+1}])$.

\begin{align*}
    \BD_H(s,t)
    & \le w(\widetilde{\pi}) = \sum_{i=1}^{k+1} \left(  w(\pi_i) + w(\pi'_i) \right) 
    \le \sum_{i=1}^{k+1} w(\pi^*[r_{i-1} : r_i]) + 2 \alpha \sum_{i=1}^{k+1} w(\pi^*[r_{i-1} : r_i]) \\
    & = w(\pi^*) + 2\alpha w(\pi^*)
    = (1 + 2\alpha) \BD_G(s,t).
\end{align*}

\begin{theorem}
\label{thm:all_pairs_general_2alpha_plus_1_spanner}
Given $k$ subsetwise spanners $H_1, \dots, H_k$, where $H_i$ is an $R_i \times R_i$ $\alpha$-spanner of $G$, it is possible to compute in polynomial time a group Steiner spanner of $G$ with stretch $2\alpha+1$ and size $O\left(nk + \left| \bigcup_{i=1}^k H_i \right| \right)$. 
\end{theorem}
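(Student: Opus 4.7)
The plan is to take $H = \bigcup_{i=1}^k H_i \cup \bigcup_{i=1}^k F_i$, where each $F_i$ is a spanning forest produced by the clustering procedure (introduced in \Cref{sec:preserver_singleton}) applied to $G$ with the vertices of $R_i$ as centers. Each $F_i$ contributes at most $n-1$ edges, which immediately gives the size bound $O(kn + |\bigcup_i H_i|)$ and a polynomial construction time. The crucial property inherited from the clustering procedure is that, for any $v \in V$, the unique path in $F_i$ from $v$ to its tree-root is a shortest path in $G$ from $v$ to its closest vertex in $R_i$.

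For the stretch analysis, I would fix arbitrary $s,t \in V$, let $\pi^*$ be a shortest group Steiner path from $s$ to $t$ in $G$, and (after reindexing the groups) assume $\pi^*$ visits $R_1,\ldots,R_k$ in this order; denote by $r_i$ the first vertex of $R_i$ met by $\pi^*$, with the convention $r_0 = s$ and $r_{k+1} = t$. The goal is to construct an explicit group Steiner path $\widetilde{\pi}$ in $H$ from $s$ to $t$ and prove $w(\widetilde{\pi}) \le (2\alpha+1)\,w(\pi^*)$.

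The path $\widetilde{\pi}$ alternates between forest hops and spanner hops. For each $i=1,\ldots,k$, let $r'_i$ be the root of the tree of $F_i$ containing $r_{i-1}$ (so $r'_i$ is the closest vertex of $R_i$ to $r_{i-1}$), and let $\pi_i$ be the corresponding $F_i$-path; similarly, let $r'_{k+1}$ be the closest vertex of $R_k$ to $t$ in $F_k$, and $\pi_{k+1}$ the associated $F_k$-path. For $i=1,\ldots,k$, let $\pi'_i$ be a path in $H_i$ from $r'_i$ to $r_i$, and let $\pi'_{k+1}$ be a path in $H_k$ from $r_k$ to $r'_{k+1}$. Then $\widetilde{\pi} = \pi_1 \circ \pi'_1 \circ \cdots \circ \pi_k \circ \pi'_k \circ \pi'_{k+1} \circ \pi_{k+1}$ is a group Steiner path in $H$, since it visits $r_i \in R_i$ for every $i$.

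The main (and only substantive) step is the piecewise bound. Since $r_i$ is itself a candidate center of $F_i$, the clustering procedure yields $w(\pi_i) \le w(\pi^*[r_{i-1}:r_i])$ for $i=1,\ldots,k+1$ (using $w(\pi^*[r_k:t])$ when $i=k+1$). The $R_i \times R_i$ $\alpha$-spanner property of $H_i$ combined with the triangle inequality gives $w(\pi'_i) \le \alpha\, d_G(r'_i,r_i) \le \alpha\bigl(w(\pi_i) + w(\pi^*[r_{i-1}:r_i])\bigr) \le 2\alpha\, w(\pi^*[r_{i-1}:r_i])$, with a symmetric bound for $\pi'_{k+1}$. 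Summing and telescoping $\sum_{i=1}^{k+1} w(\pi^*[r_{i-1}:r_i]) = w(\pi^*)$ yields $w(\widetilde{\pi}) \le (1 + 2\alpha)\,w(\pi^*)$. The main technical obstacle I anticipate is the careful accounting of the endpoint segments attached to $s$ and $t$ (which need not be required vertices), handled uniformly by routing their forest hops through $F_1$ and $F_k$ respectively; beyond this, everything reduces to the guarantees already established for the clustering procedure and for subsetwise spanners.
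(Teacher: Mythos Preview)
Your proposal is correct and follows essentially the same approach as the paper: the same construction $H=\bigcup_i H_i\cup\bigcup_i F_i$, the same alternating forest-hop/spanner-hop path $\widetilde{\pi}$, and the same piecewise bounds $w(\pi_i)\le w(\pi^*[r_{i-1}:r_i])$ and $w(\pi'_i)\le 2\alpha\,w(\pi^*[r_{i-1}:r_i])$ summed to $(2\alpha+1)w(\pi^*)$. The only minor slip is the cross-reference: the clustering procedure is introduced at the start of the $(1+\varepsilon)$-spanner subsection, not in \Cref{sec:preserver_singleton}.
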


\begin{figure}
    \centering
    \includegraphics[scale=.60]{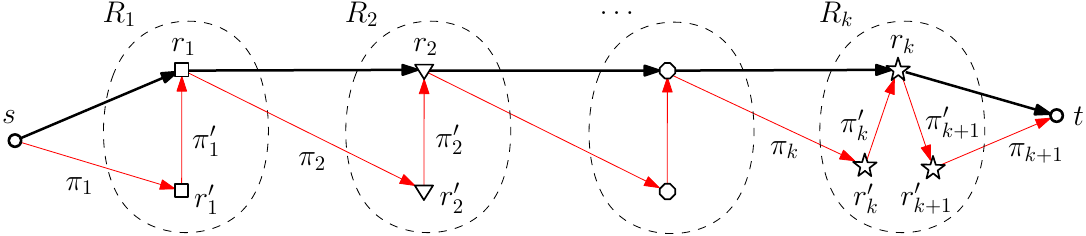}
    \caption{A qualitative depiction of analysis of the stretch $2 \alpha+1$. The shortest group Steiner path $\pi^*$ from $s$ to $t$ is in bold, while the path $\widetilde{\pi}$ in the spanner is in red.}
    \label{fig:2alpha+1_spanner}
\end{figure}

\subsubsection*{A corresponding distance oracle}

For each vertex $v$, we store the root $r_{i,v}$ of the tree containing $v$ in the forest $F_i$ computed by our clustering procedure using the vertices in $R_i$ as centers, along with $d_G(v, r_{i,v}) = d_{F_i}(v, r_{i,v})$.
Then, we compute $k$ complete auxiliary graphs $G_{R_i}$ for $i=1,\dots,k$ on the vertices in $R_i$, where the weight of a generic edge $(u,v)$ is $d_G(u,v)$.
Next, we build a graph $H_R$ which is obtained as the union, for each $i=1,\dots,k$, of a
$R_i \times R_i$ $\alpha$-spanner of $G_{R_i}$ (w.r.t.\ shortest path metric). 
This graph $H_R$ is then augmented by adding, for each required vertex $x \in R$, the $k$ edges $(x, r_{i,x})$ for $i=1, \dots, k$, where the weight of $(x, r_{i,x})$ is $d_G(x, r_{i,x})$.

Notice that $H_R$ (and its augmented version) can be computed in polynomial time and that its worst-case asymptotic size is at most that size of our group Steiner spanner.

To report $\BD(s,t)$ we construct a new graph $H'_R$ by adding new vertices and edges to the augmented version of $H_R$. In details, we add two new vertices $s'$ and $t'$, along with all edges $(s', r_{i,s})$ and $(t', r_{i, t})$ of respective weights $d_G(s, r_{i,s})$ and $d_G(t, r_{i,t})$ for $i=1,\dots,k$. Finally, we return the group Steiner distance $\BD_{H'_R}(s', t')$ as computed by the algorithm of \Cref{sec:computing_group_Steiner_paths}.
This requires time $O(2^k k \cdot |R|^2 + |R|^3)$.

\subsection{The second construction.}
The second construction is the following. We build the spanner $H$ as the union of a subsetwise $R\times R$ $\alpha$-spanner $H'$ of $G$, plus a spanning forest $F$ that is obtained by our clustering procedure using the vertices in $R$ as centers. Clearly, the size of $H$ is $O(n+|H'|)$.

We now show that the stretch factor of $H$ is $2\alpha + 1$.
Fix any two vertices $s,t \in V$, and, without loss of generality, assume that the shortest group Steiner path $\pi^*$ from $s$ to $t$ in $G$ traverses the groups $R_1,\dots,R_k$ in this order (otherwise, re-index the groups accordingly). Let $r_i$ be the first required vertex in $R_i$ reached by $\pi^*$. Let $r_s$ be the vertex of $R$ that corresponds to the center of the cluster in $F$ that contains $s$. Similarly, let $r_t$ be the vertex of $R$ that corresponds to the center of the cluster in $F$ that contains $t$ (it might happen that $r_s=r_t$). We have $d_H(s,r_s)=d_G(s,r_s) \leq d_G(s,r_1)$ and $d_H(r_t,t)=d_G(r_t,t)\leq d_G(r_k,t)$. Moreover, we can upper bound the group Steiner distance from $r_s$ to $r_t$ in $G$ by the following 
\[
\BD_G(r_s,r_t)\leq d_G(r_s,s)+\BD_G(s,t)+d_G(t,r_t) \leq d_G(s,r_1)+\BD_G(s,t)+d_G(r_k,t)\leq 2\BD_G(s,t).
\]
As a consequence, since $H$ contains an $R\times R$ $\alpha$-spanner of $G$, using~\Cref{lemma:group_steiner_path_decomposition},  we obtain that $\BD_H(r_s,r_t) \leq \alpha \BD_G(r_s,r_t)\leq 2\alpha \BD_G(s,t)$. Therefore $\BD_{H}(s,t)$ is at most:
\[
   \leq d_H(s,r_s)+\BD_H(r_s,r_t)+d_H(r_t,t)                \leq d_G(s,r_1)+2\alpha \BD_G(s,t) + d_G(r_k,t) \leq (2\alpha+1) \BD_G(s,t).
\]

\begin{theorem}
\label{thm:all_pairs_general_2alpha_plus_1_spanner_bis}
Given a subsetwise $R \times R$ $\alpha$-spanner $H'$ of $G$, it is possible to compute in polynomial time a group Steiner spanner of $G$ with stretch $2\alpha+1$ and size $O\left(n + |H'| \right)$. 
\end{theorem}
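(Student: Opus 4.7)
The plan is to build the spanner $H$ as the union $H' \cup F$, where $F$ is a spanning forest produced by the clustering procedure (introduced in the singleton section) using the entire required set $R$ as the center set. The size bound is immediate: $F$ contributes at most $n-1$ edges and $H'$ contributes $|H'|$ edges, yielding the claimed $O(n+|H'|)$ bound. The construction is clearly polynomial, since both the clustering (a single SSSP on an auxiliary graph with a dummy super-source) and $H'$ itself are given/computable in polynomial time.

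To bound the stretch, I fix $s,t \in V$ and an optimal group Steiner path $\pi^{*}$ from $s$ to $t$ in $G$. Let $r_1$ be the first required vertex and $r_k$ the last required vertex traversed by $\pi^{*}$, and let $r_s, r_t \in R$ be the roots (i.e., the closest required vertices) of the clusters in $F$ containing $s$ and $t$, respectively. The clustering property gives $d_F(s,r_s) = d_G(s,r_s) \le d_G(s,r_1)$ and symmetrically $d_F(r_t,t) \le d_G(r_k,t)$. The plan is then to exhibit a single group Steiner walk in $H$ of the form (path in $F$ from $s$ to $r_s$) $\circ$ (group Steiner path in $H'$ from $r_s$ to $r_t$) $\circ$ (path in $F$ from $r_t$ to $t$), and bound each piece.

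The crux is bounding $\BD_{H'}(r_s,r_t)$. Since both endpoints lie in $R$, Lemma~\ref{lemma:group_steiner_path_decomposition} implies that an optimal group Steiner path between two vertices of $R$ decomposes into shortest subpaths in $G$ with endpoints in $R\cup\{r_s,r_t\} \subseteq R$, so the $R\times R$ $\alpha$-spanner property of $H'$ transfers directly to the group Steiner metric: $\BD_{H'}(r_s,r_t) \le \alpha\,\BD_G(r_s,r_t)$. Then a triangle-inequality-style argument using the clustering bound gives
\[
\BD_G(r_s,r_t) \le d_G(r_s,s) + \BD_G(s,t) + d_G(t,r_t) \le d_G(s,r_1) + \BD_G(s,t) + d_G(r_k,t) \le 2\,\BD_G(s,t),
\]
where the last step uses $d_G(s,r_1),\,d_G(r_k,t) \le \BD_G(s,t)/2$ only loosely via the fact that both are parts of $\pi^{*}$.

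Combining the three estimates yields
\[
\BD_H(s,t) \le d_F(s,r_s) + \BD_{H'}(r_s,r_t) + d_F(r_t,t) \le d_G(s,r_1) + 2\alpha\,\BD_G(s,t) + d_G(r_k,t) \le (2\alpha+1)\,\BD_G(s,t),
\]
as required. The main obstacle I anticipate is ensuring that the reduction to a single $R\times R$ query in $H'$ is sound in the group Steiner metric: this hinges on Lemma~\ref{lemma:group_steiner_path_decomposition}, which lets us treat $\BD$ between two required vertices as a sum of shortest paths that $H'$ already preserves up to factor $\alpha$. Once that observation is in hand, the rest is a clean triangle-inequality calculation.
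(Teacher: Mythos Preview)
Your proposal is correct and follows essentially the same approach as the paper: the construction $H=H'\cup F$ with $F$ the clustering forest on centers $R$, the key observation that Lemma~\ref{lemma:group_steiner_path_decomposition} transfers the $R\times R$ $\alpha$-spanner guarantee to the group Steiner metric between $r_s$ and $r_t$, and the same triangle-inequality chain through $r_s,r_t$ to reach $(2\alpha+1)\,\BD_G(s,t)$. The only minor wording issue is your aside that ``$d_G(s,r_1),\,d_G(r_k,t)\le \BD_G(s,t)/2$'': what is actually used (and what you correctly invoke) is the weaker fact $d_G(s,r_1)+d_G(r_k,t)\le \BD_G(s,t)$, since these are edge-disjoint subpaths of $\pi^{*}$.
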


\subsubsection*{A corresponding distance oracle}

For each vertex $v$, we store the root $r_v$ of the tree containing $v$ in the forest $F$ computed by our clustering procedure, along with $d_G(v, r_v) = d_F(v, r_v)$.
Then, we compute a complete auxiliary graph $G_R$ on the vertices in $R$, where the weight of a generic edge $(u,v)$ is $d_G(u,v)$.
Next, we compute and we store a $R \times R$ $\alpha$-spanner $H_R$ of $G_R$ (w.r.t.\ shortest path metric). Notice that both $G_R$ and $H_R$ can be computed in polynomial time and that the asymptotic size of $H_R$ is at most the size of our group Steiner spanner in the worst case.

To report $\BD(s,t)$ we compute $\BD_{H_R}(r_s, r_t)$ and we return $d_G(s, r_s) + \BD_{H_R}(r_s, r_t) + d_G(r_t, t)$. This can be done in time $O(2^k k \cdot |R|^2 + |R|^3)$ using the algorithm of \Cref{sec:computing_group_Steiner_paths}.

\section{Single-source group Steiner spanners}
\label{app:single-source}

In this section we present single-source group Steiner spanners, some of which can be converted into distance oracles with constant query time. 

\subsection{A single-source group Steiner preserver with size \texorpdfstring{$O(2^k n)$}{O(2\^k n)}}

We first focus on the design of single-source preservers. We observe that a single-source group Steiner preserver only needs to preserve $n$ shortest group Steiner paths, one for each target vertex $t \in V$. As \Cref{lemma:group_steiner_path_decomposition} states that each such path can be obtained by concatenating $k+1$ shortest paths in $G$, we can build single-source group Steiner preservers by constructing pairwise preservers (w.r.t. shortest-path metric) for a set $P$ of $O(kn)$ pairs of vertices. Thus, for weighted graphs, the preserver of size $O(\min\{|P|n^{1/2},n|P|^{1/2}\})$ given in~\cite{CoppersmithE06} implies the existence of a single-source group Steiner preserver of size $O(k^{1/2}n^{3/2})$. For unweighted graphs, the preserver of size $O(n^{2/3}|P|^{2/3}+n|P|^{1/3})$ given in~\cite{BodwinWilliams} implies the existence of a single-source group Steiner preserver of size $O(k^{2/3}n^{4/3})$. It is worth noticing that, in general, we do not know how to compute the set $P$ in polynomial time. Therefore, to the best of our knowledge, the construction of single-source pairwise preservers from classical preservers requires $2^k k \cdot n^{O(1)}$ time as, to compute the set $P$, we need to run the algorithm that computes a shortest group Steiner path from $s$ to every target vertex $t$. 

In this section, we provide an algorithm that builds a single-source group Steiner \emph{preserver} with at most $2^k n$ edges in time $2^k k \cdot n^{O(1)}$. Notably, our construction implies the existence of single-source group Steiner preserver of size $O(n \, \text{polylog}(n))$ when $k=O(\log \log n)$.

Recall that $\BD(s,t \mid \mathcal{C})$ denotes the group Steiner distance between $s$ and $t$ w.r.t.\ the groups in $\mathcal{C}$.
We compute $H$ incrementally, starting from $H=(V,\emptyset)$. We will inductively process each subset $\mathcal{C} \subseteq \mathcal{R}$ in increasing order of size $\vert \mathcal{C} \vert$, and, after processing $\mathcal{C}$, we will guarantee the following properties: 
\begin{enumerate}[i.]
    \item we have explicitly computed and stored all the values $\BD_G(s,t \mid \mathcal{C})$, for every $t$;
    
    \item the current spanner $H$ is a single-source preserver for the groups $\mathcal{C}$, i.e. $\BD_H(s,t \mid \mathcal{C}) = \BD_G(s,t \mid \mathcal{C})$.
\end{enumerate}

Obviously, when $\mathcal{C} = \mathcal{R}$ the subgraph $H$ is guaranteed to be a single-source group Steiner preserver with source vertex $s$ w.r.t.\ the groups $R_1,\dots, R_k$.

We now show how to process the generic group set $\mathcal{C} \subseteq \mathcal{R}$.
The base case is $\mathcal{C} = \emptyset$.
In this case we add to $H$ the shortest path tree rooted in $s$  and we store all distances from $s$ to the vertices of $V$ in $G$. Clearly, $H$ satisfies properties (i) and (ii).

Consider now a non-empty collection $\mathcal{C}=\{ \widetilde{R}_1, \dots, \widetilde{R}_h \}$.
Notice that, by induction,  $H$ is actually a single-source preserver w.r.t.\ every $\mathcal{C'} \subset \mathcal{C}$. We show how to add at most $n-1$ edges to $H$, in order to guarantee (i) and (ii) for $\mathcal{C}$ as well.

We create a new graph $\widetilde{G}$ by augmenting $G$ with a new vertex $\widetilde{s}$ and an edge from $\widetilde{s}$ for each required node $r \in \bigcup_{i=1}^h \widetilde{R}_i$, of weight $\BD_G(s,r \mid \mathcal{C} \setminus \{\widetilde{R}_i \mid r \in \widetilde{R}_i \})$ (see \Cref{fig:single-source-preserver}). Notice that all the weights of the added edges are values that we have already computed (by property (i)).

We then find a shortest path tree $T$ of $\widetilde{G}$ with source $\widetilde{s}$, and we store the corresponding $n$ single-source distance values $d_T(\widetilde{s}, v)$ for $v \in V$.
\Cref{lemma:group_steiner_path_decomposition} guarantees that, for every vertex $t \in V$, we have:
\begin{align*}
    d_T(\widetilde{s},t)=d_{\widetilde{G}}(\widetilde{s},t)
    &= \min_{r \in \cup_{i=1}^h \widetilde{R}_i} \left( w(\widetilde{s},r) + d_G(r,t) \right) \\
    &= \min_{1 \leq i \leq h} \min_{r \in \widetilde{R}_i} \left( \sigma_G(s,r \mid \mathcal{C} \setminus \{\widetilde{R}_j \mid r \in \tilde{R}_j\}) + d_G(r,t) \right)
    = \sigma_G(s,t \mid \mathcal{C}).
\end{align*}

Moreover, property (ii) and the induction hypothesis ensure that, for each $r \in \widetilde{R}_i$,  $H$ contains a group Steiner path of length $\sigma_G(s,r \mid \mathcal{C} \setminus \{\widetilde{R}_j \mid r \in \tilde{R}_j\})$ w.r.t.\ the groups in $\mathcal{C} \setminus \{\widetilde{R}_j \mid r \in \tilde{R}_j\}$. Hence, by adding to $H$ the spanning forest $F_{\mathcal{C}}$ obtained by removing $\widetilde{s}$ (and its incident edges) from $T$, we have that $H$ satisfies properties (i) and (ii) w.r.t.\ the groups in $\mathcal{C}$. 

This proves that at the end of the algorithm, $H$ is a single-source group Steiner preserver of $G$. 
Moreover, the size of $H$ is $O(2^k n)$ as it increases by at most $n-1$ for each of the $n-1$ iterations. Since each iteration can be implemented in polynomial time, we have proved the following: 

\begin{figure}
    \centering
    \includegraphics[scale=.7]{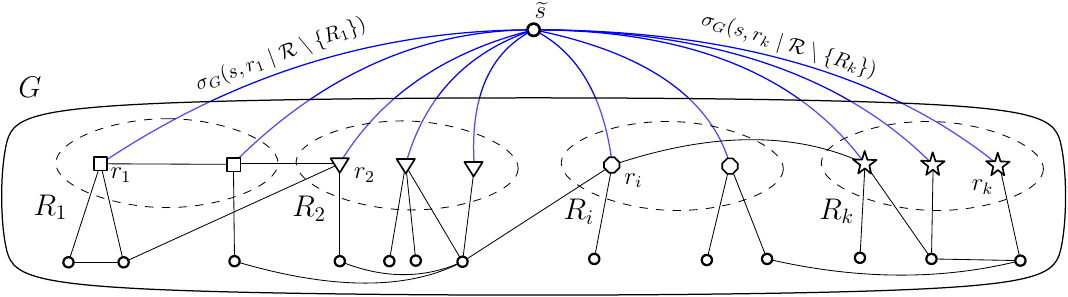}
    \caption{The auxiliary graph $\widetilde{G}$ for the case $\mathcal{C} = \mathcal{R}$.}
    \label{fig:single-source-preserver}
\end{figure}

\begin{theorem}
\label{thm:single_source_preserver}
It is possible to compute a single-source group Steiner preserver having at most $2^kn$ edges in time $2^k k \cdot n^{O(1)}$. 
\end{theorem}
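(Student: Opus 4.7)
The plan is to construct $H$ incrementally by iterating over all subsets $\mathcal{C} \subseteq \mathcal{R}$ in non-decreasing order of cardinality, adding at most $n-1$ edges per iteration so that, after $\mathcal{C}$ is processed, the following two invariants hold: (i) the values $\sigma_G(s, v \mid \mathcal{C})$ for every $v \in V$ are explicitly stored, and (ii) the current $H$ is a single-source group Steiner preserver with respect to the groups in $\mathcal{C}$. Once $\mathcal{C} = \mathcal{R}$ the invariants immediately yield the claimed preserver, while the bound $2^k(n-1) = O(2^k n)$ on the size follows directly from the number of iterations. The base case $\mathcal{C} = \emptyset$ is handled by seeding $H$ with a shortest-path tree of $G$ rooted at $s$ and by storing the associated single-source distances.

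For the inductive step I would reduce the task to a single standard shortest-path computation. Given $\mathcal{C} = \{\widetilde{R}_1, \dots, \widetilde{R}_h\}$, construct an auxiliary graph $\widetilde{G}$ obtained from $G$ by attaching a fresh source vertex $\widetilde{s}$ and, for each required vertex $r \in \bigcup_{i=1}^h \widetilde{R}_i$, an edge $(\widetilde{s}, r)$ of weight equal to the previously stored value $\sigma_G(s, r \mid \mathcal{C} \setminus \{\widetilde{R}_j : r \in \widetilde{R}_j\})$. Compute a shortest-path tree $T$ of $\widetilde{G}$ rooted at $\widetilde{s}$, store the $n$ distances $d_T(\widetilde{s}, v)$, and augment $H$ with the forest obtained by deleting $\widetilde{s}$ (and its incident edges) from $T$; this contributes at most $n-1$ new edges.

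The correctness of the inductive step rests on \Cref{lemma:group_steiner_path_decomposition}: a shortest group Steiner path from $s$ to $v$ with respect to $\mathcal{C}$ can be split at its first required vertex $r$, giving length $\min_{r} \bigl( \sigma_G(s, r \mid \mathcal{C} \setminus \{\widetilde{R}_j : r \in \widetilde{R}_j\}) + d_G(r, v) \bigr)$, which matches $d_{\widetilde{G}}(\widetilde{s}, v)$ by construction; this establishes invariant (i). For invariant (ii), the inductive hypothesis provides, for every such $r$, a witnessing path in $H$ of the right length that already covers the strictly smaller collection $\mathcal{C} \setminus \{\widetilde{R}_j : r \in \widetilde{R}_j\}$, and concatenating it with the $r$-to-$v$ path read off from the newly added forest (which realizes $d_G(r,v)$) yields a group Steiner path in $H$ that covers $\mathcal{C}$ and attains the optimum. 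Each iteration runs one Dijkstra computation on an $O(n+m)$-sized graph plus polynomial bookkeeping, for a total running time of $2^k n^{O(1)}$, with the explicit factor of $k$ appearing while assembling the weights of the $O(k|R|)$ edges incident to $\widetilde{s}$. I expect the only subtle point to be choosing, for each $r$, the correct strictly smaller collection to drop, namely all groups containing $r$; this is exactly what makes the recursive distance appearing on the edge $(\widetilde{s}, r)$ align with the decomposition dictated by \Cref{lemma:group_steiner_path_decomposition}, and ensures the argument still goes through when a required vertex lies in several groups simultaneously.
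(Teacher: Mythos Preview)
Your proposal is correct and follows essentially the same construction as the paper: the same dynamic program over subsets $\mathcal{C}\subseteq\mathcal{R}$ in increasing order of size, the same two invariants, the same auxiliary graph $\widetilde{G}$ with a fresh source $\widetilde{s}$ connected to each required vertex $r$ by an edge weighted with the already-computed distance $\sigma_G(s,r\mid \mathcal{C}\setminus\{\widetilde{R}_j : r\in\widetilde{R}_j\})$, and the same forest-addition step. One minor wording slip: the vertex $r$ at which you split is the \emph{last} required vertex along the path from $s$ (the one after which all of $\mathcal{C}$ has been visited), not the first---your formula $\sigma_G(s,r\mid\cdots)+d_G(r,v)$ already reflects this correctly, so only the prose needs adjusting.
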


\subparagraph*{A corresponding distance oracle.}

An oracle reporting exact group Steiner distances in constant time and having size $O(n)$ is trivial to obtain in the single-source case, once the above spanner has been computed since it suffices to store the values $\BD(s,t \mid \mathcal{R})$.
The same asymptotic size of the spanner allows for a more general query: we can report the group Steiner distances from $s$ w.r.t.\ any subset of groups of choice by simply storing $\BD(s,t \mid \mathcal{C})$ for every $\mathcal{C} \subseteq \mathcal{R}$.

We can also provide the path associated with this latter query by storing a shortest-path tree of $G$ from $s$ and all forests $F_\mathcal{C}$ computed while building the spanner.
To answer a path query for $t$ w.r.t.\ the groups in $\mathcal{C} \subseteq \mathcal{R}$, we find the root $r$ of the tree containing $t$ in $F_\mathbf{C}$, and the unique path $\pi_t$ from $r$ to $t$ in $F_\mathcal{C}$. We then recursively query our oracle for the path $\pi_r$ between $s$ and $t$ w.r.t.\ the groups in $\mathcal{C} \setminus \{ R_i \mid r \in R_i \}$, and report $\pi_r \circ \pi_t$. The base case of the recursion correspond to $\mathcal{C} = \emptyset$, in which we return the shortest path $\pi_t$ from $s$ to $t$ in $G$.

Each recursive call can be implemented to run in constant time, plus a time proportional to the number of edges of the last portion $\pi_t$ of the returned path. One can observe that the number of returned edges is asymptotically lower bounded by the number of recursive calls, hence to overall time spent is proportional to the number of edges of the returned path.

\subsection{A group Steiner tree spanner with stretch \texorpdfstring{$3$}{3}}

Let $\pi^*$ be the shortest group Steiner path starting from $s$ among those that end in a required vertex, say $r^*$. More formally, $r^*= \arg \min_{r \in R} \BD_G(s,r)$. Notice that $\pi^*$ can be computed in time $2^k k \cdot \text{poly}(n)$.

Let $T'$ be any spanning subgraph of $\pi^*$. We define our single-source tree spanner $T$ as the union of $T'$ and the forest $F$ obtained by running the clustering procedure with centers corresponding to all the vertices of $T'$. Clearly, $T$ has $n-1$ edges. We now prove the claimed stretch factor of $3$. For any vertex $t$, let $t'$ be the center in $T'$ of the cluster in $F$ that contains $t$. 
$T$ contains the group Steiner path $\pi$ from $s$ to $t$ which is obtained as follows: starting from $s$, we traverse all edges of $T'$ at most twice to obtain a group Steiner path from $s$ to $t'$ of total weight of at most $2w(\pi^*)$, then we go along the shortest path from $t'$ to $t$ whose length is at most $d_G(s,t)\leq \BD_G(s,t)$ as $s$, being a vertex of $T'$, is a center in the clustering procedure and therefore, $d_G(t',t)\leq d_G(s,t)$. As $w(\pi^*) = \BD_G(s,r^*) \leq \BD_G(s,t)$, we have
\begin{align*}
    w(\pi)
    &\leq 2w(T') + d_F(t',t) \le 2\BD_G(s,r^*) + d_G(t',t) \leq 2\BD_G(s,t) + d_G(s,t) \le 3\BD_G(s,t). 
\end{align*}

\begin{theorem}
    \label{thm:single_source_stretch_3}
    It is possible to compute a single-source group Steiner tree spanner with stretch factor $3$ and $n-1$ edges in 
    time $2^k k \cdot n^{O(1)}$.
\end{theorem}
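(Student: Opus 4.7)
The plan is to construct a tree spanner of exactly $n-1$ edges whose ``backbone'' is a shortest group Steiner path emanating from $s$. First, I would determine $r^* = \arg\min_{r \in R} \BD_G(s, r)$ together with a corresponding shortest group Steiner path $\pi^*$ from $s$ to $r^*$; since $\pi^*$ visits at least one vertex from each group $R_i$, it can be computed in $2^k k \cdot n^{O(1)}$ time via the algorithm of \Cref{crl:single_pair_1}. I would then let $T'$ be a spanning tree of the subgraph of $G$ consisting of the edges traversed by $\pi^*$ (noting that $\pi^*$ need not be simple, so one picks a spanning tree of the underlying multigraph).

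Next, I would extend $T'$ to a spanning tree $T$ of $G$ by invoking the clustering procedure from \Cref{sec:preserver_singleton} on the center set $C = V(T')$: this attaches every vertex outside $T'$ to its nearest center via a shortest path, yielding a forest $F$ whose union with $T'$ has exactly $n-1$ edges.

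For the stretch bound, fix an arbitrary target $t \in V$ and let $t' \in V(T')$ be the root, in $F$, of the cluster containing $t$. The tree $T$ admits a group Steiner walk from $s$ to $t$ that (i) traverses $T'$ from $s$ to $t'$ in a depth-first fashion, using each edge of $T'$ at most twice for a total cost of at most $2 w(T') \le 2 w(\pi^*)$, while inheriting the group-covering property of $\pi^*$, and then (ii) follows the unique $t'$-to-$t$ path in $F$, of length $d_G(t', t)$. I would then chain the following inequalities: (a) $w(\pi^*) = \BD_G(s, r^*) \le \BD_G(s, t)$, since any group Steiner path from $s$ to $t$ contains a prefix that is itself a group Steiner path from $s$ to some required vertex, so $\BD_G(s,r^*)$ is a lower bound on $\BD_G(s,t)$; and (b) $d_G(t', t) \le d_G(s, t) \le \BD_G(s, t)$, because $s$ lies in $V(T')$ and is therefore itself a candidate center, so the clustering assigns $t$ to a center at least as close as $s$. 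Summing these contributions yields total length at most $3 \, \BD_G(s, t)$, as desired.

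The main obstacle is conceptual rather than technical: one must notice that a single group Steiner path to the \emph{nearest} required vertex already provides a sufficient backbone, since every group Steiner path leaving $s$ is forced to touch $R$ and therefore has length at least $\BD_G(s, r^*)$. Once this observation is in place, the clustering procedure performs the sparsification needed to reduce the total edge count to exactly $n-1$ while losing only an additive $d_G(s,t)$ term in the stretch analysis.
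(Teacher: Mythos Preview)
Your proposal is correct and follows essentially the same approach as the paper: compute a shortest group Steiner path $\pi^*$ from $s$ to the required vertex $r^*$ minimizing $\BD_G(s,r)$, take a spanning tree $T'$ of the subgraph induced by $\pi^*$, extend it to a spanning tree of $G$ via the clustering procedure with centers $V(T')$, and bound the stretch using the DFS-style traversal of $T'$ together with the observations $\BD_G(s,r^*)\le \BD_G(s,t)$ and $d_G(t',t)\le d_G(s,t)$. The only minor imprecision is the reference to \Cref{crl:single_pair_1}, which is stated for a single pair; to locate $r^*$ you either invoke it $|R|$ times or appeal directly to the single-source algorithm of \Cref{sec:computing_group_Steiner_paths}, but either way the running time remains $2^k k\cdot n^{O(1)}$.
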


\subparagraph*{A corresponding distance oracle.} By storing $w(T')$, all the single-source distances $d_{T'}(s,t')$ in $T'$, and the intra-cluster distances $d_F(t',t)$ from any vertex $t$ and its associated center $t'$ in $F$, it is immediate to answer a distance query in constant time by reporting $2w(\pi^*) - d_{T'}(s,t')+ d_F(t',t)$. Moreover, the corresponding group Steiner path can be reported in time proportional to its number of edges as we need to traverse all edges of $T'$ but those along the path from $s$ to $t'$ exactly twice, while all edges along the path from $s$ to $t$ in $T$ needs to be traversed exactly once.

\subsection{A group Steiner spanner with stretch \texorpdfstring{$\alpha + 1$}{alpha+1}}
We build a single-source group Steiner spanner $H$ of stretch $\alpha + 1$ as follows: $H$ is the union of a shortest path tree $T$ of $G$ rooted in $s$, a subsetwise $R\times R$ $\alpha$-spanner $H'$ of $G$, and the forest $F$ obtained by running the clustering procedure with centers in $R$.
We now argue that the stretch factor of $H$ is $\alpha+1$. 

Consider a target vertex $t$, and consider a shortest group Steiner path $\pi^*$ between $s$ and $t$. 
Clearly, $\pi^*$ traverses all the groups in $\mathcal{R}$.
Without loss of generality, assume that $\pi^*$ traverses the groups $R_1,\dots,R_k$ in this order (otherwise, re-index the groups properly), and let $r_i$ be the first required vertex in $R_i$ reached by $\pi^*$. The first upper bound on $\BD_H(s,t)$ is 
\begin{align*}
    \BD_H(s,t)
    &\leq d_T(s,r_1) + d_{H'}(r_1,r_2) + \dots + d_{H'}(r_{k-1},r_k)+d_T(r_k,s)+d_T(s,t)\\
    &\le \alpha d_G(s,r_1)+ \alpha d_G(r_1,r_2)+ \cdots + \alpha d_G(r_{k-1},r_k)+ d_G(r_k,s)+d_G(s,t) \\
    &\le \alpha\big(\BD_G(s,t)-d_G(r_k,t)\big)+\big(\BD_G(s,t)-d_G(r_k,t)\big)+\BD_G(s,t)\\
    &=(\alpha +2)\BD_G(s,t)-(\alpha+1)d_G(r_k,t).
\end{align*}

Let $r_t$ be the vertex of $R$ which is the center of the cluster in $F$ that contains $t$. By the triangle inequality, $d_{H'}(r_k,r_t)\leq \alpha d_G(r_k,r_t)\leq \alpha \big(d_G(r_k,t)+d_G(r_t,t)\big)\leq 2\alpha d_G(r_k,t)$. The second upper bound on $\BD_H(s,t)$ is
\begin{align*}
    \BD_H(s,t)
    &\leq d_T(s,r_1) + d_{H'}(r_1,r_2) + \dots + d_{H'}(r_{k-1},r_k)+d_{H'}(r_k,r_t)+d_F(r_t,t)\\
    &\le \alpha d_G(s,r_1)+ \alpha d_G(r_1,r_2)+ \cdots + \alpha d_G(r_{k-1},r_k)+2\alpha d_G(r_k,r_t)+d_G(r_t,t) \\
    &\le \alpha \big(\BD_G(s,t)-d_G(r_k,t)\big)+2\alpha d_G(r_k,t)+d_G(r_k,t)\\
    &=\alpha\BD_G(s,t)+(\alpha+1)d_G(r_k,t).
\end{align*}
By combining the two upper bounds, we obtain
\[
2\BD_H(s,t) \leq (\alpha+2)\BD_G(s,t)-(\alpha+1)d_G(r_k,t)+\alpha \BD_G(s,t)+(\alpha+1)d_G(r_k,t) = 2(\alpha+1)\BD_G(s,t), 
\]
from which we derive $\BD_H(s,t)\leq (\alpha+1) \BD_G(s,t)$. We have proved the following:

\begin{theorem}
\label{thm:single_source_general_alpha_plus_2_spanner}
Given a subsetwise $R \times R$ $\alpha$-spanner $H'$ of $G$, it is possible to compute in polynomial time a single-source group Steiner spanner of $G$ having stretch $\alpha+1$ and size $O\left(n + \left| H' \right| \right)$. 
\end{theorem}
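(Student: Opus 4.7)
The plan is to build $H$ as the union of three ingredients that together cover the structural properties of a shortest group Steiner path from $s$: a shortest-path tree $T$ of $G$ rooted at $s$ (to reach the ``first'' required vertex cheaply), the supplied subsetwise $R \times R$ $\alpha$-spanner $H'$ (to chain together the required-vertex hops dictated by \Cref{lemma:group_steiner_path_decomposition}), and the spanning forest $F$ produced by the clustering procedure of \Cref{sec:preserver_singleton} with centers $R$ (to reach any target $t$ from its closest required vertex). Since $T$ and $F$ each contribute at most $n-1$ edges and all three components are polynomial-time computable from $G$ and $H'$, the size bound $O(n + |H'|)$ and polynomial construction time are immediate.

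For the stretch, the plan is to fix any target $t$, take a shortest group Steiner path $\pi^*$ from $s$ to $t$ in $G$, and (reindexing groups if needed) let $r_i$ denote the first vertex of $R_i$ met along $\pi^*$, so \Cref{lemma:group_steiner_path_decomposition} guarantees that the subpaths $\pi^*[r_{i-1}{:}r_i]$ are shortest paths in $G$ and hence chain correctly when bounded with $d_G$. The key idea is to prove \emph{two} complementary upper bounds on $\BD_H(s,t)$: a ``go-back-through-the-tree'' bound that walks to $r_1$ in $T$, hops $r_1 \to r_2 \to \dots \to r_k$ in $H'$, then returns to $s$ in $T$ and finally walks $s \to t$ in $T$, yielding something of the form $(\alpha+2)\BD_G(s,t) - (\alpha+1)d_G(r_k,t)$; and a ``route-through-the-cluster'' bound that instead jumps from $r_k$ to the cluster center $r_t \in R$ of $t$ in $F$ via $H'$ and then uses $F$ to reach $t$, yielding roughly $\alpha \BD_G(s,t) + (\alpha+1)d_G(r_k,t)$. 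The $(\alpha+1)d_G(r_k,t)$ terms carry opposite signs, so summing the two bounds gives $2\BD_H(s,t) \le 2(\alpha+1)\BD_G(s,t)$, which is the desired stretch.

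The main technical hurdle is getting the second bound tight: one must exploit that $r_t$ is a \emph{nearest} required vertex to $t$ (by the defining property of the clustering procedure) so that $d_G(r_t,t) \le d_G(r_k,t)$, and then apply the triangle inequality $d_G(r_k,r_t) \le d_G(r_k,t) + d_G(t,r_t) \le 2d_G(r_k,t)$ before invoking the stretch guarantee of $H'$ on the pair $(r_k,r_t) \in R \times R$. Without this step the second bound degenerates and the cancellation with the first bound fails. Once this observation is in place, both bounds telescope via $\sum_i d_G(r_{i-1},r_i) \le \BD_G(s,t) - d_G(r_k,t)$ and averaging yields the $(\alpha+1)$-stretch without further difficulty.
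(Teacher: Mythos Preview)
Your proposal is correct and follows essentially the same approach as the paper: the same three-ingredient construction ($T$, $H'$, $F$), the same two complementary upper bounds on $\BD_H(s,t)$ with the $\pm(\alpha+1)d_G(r_k,t)$ terms, and the same averaging trick to obtain the $(\alpha+1)$ stretch. The key technical observation you single out---using the nearest-center property of $F$ to get $d_G(r_t,t)\le d_G(r_k,t)$ and hence $d_G(r_k,r_t)\le 2d_G(r_k,t)$ before applying the $\alpha$-stretch of $H'$---is exactly the linchpin of the paper's second bound as well.
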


\subparagraph*{A corresponding distance oracle.}

For each vertex $v$, we store the root $r_v$ of the tree containing $v$ in the forest $F$ computed by our clustering procedure, along with $d_G(v, r_v) = d_F(v, r_v)$.
Then, we compute a complete auxiliary graph $G_R$ on the vertices in $R$, where the weight of a generic edge $(u,v)$ is $d_G(u,v)$.
Next, we compute and we store a $R \times R$ $\alpha$-spanner $H_R$ of $G_R$ (w.r.t.\ shortest path metric). Notice that both $G_R$ and $H_R$ can be computed in polynomial time and that the asymptotic size of $H_R$ is at most the size of our group Steiner spanner in the worst case.

To report $\BD(s,t)$, we construct a new graph $H'_R$ by augmenting $H_R$ via the addition of a new vertex $s'$ along with an edge $(s', r)$ of weight $d_G(s, r)$ for all $r \in R$.
Then we compute all group Steiner distances $\BD_{H'_R}(s', r)$ for $r \in R$ in time $O(2^k k \cdot |R|^2 + |R|^3)$ using the algorithm of \Cref{sec:computing_group_Steiner_paths}.
Finally, we return:
\[
    \min\left\{ \min_{r \in R} \left( \BD_{H'_R}(s', r) + d_G(s', r) + d_G(s,t) \right), \BD_{H'_R}(s, r_t) + d_G(r_t, t) \right\}.
\]

\section{Conclusions}

We conclude this work by mentioning some problems that we deem significant. 
Our construction of the $(1+\varepsilon)$-spanner with size $O(n/\varepsilon^2)$ for the singleton case requires a building time of $2^k k\cdot n^{O(1)}$, can a spanner with the same stretch and size $O(f(\varepsilon) \cdot n \, \text{polylog}(n))$ be built in polynomial time?
Regarding the single-source case, we conjecture that the dependency on $k$ in the $O(2^k \cdot n)$ size of our preserver is too weak. Can stronger upper bounds be proved for either the same or novel constructions? Can a lower bound that is polynomial in $k$ and linear in $n$ be shown?

\bibliographystyle{plainurl}
\bibliography{bibliography.bib}

\clearpage

\appendix

\FloatBarrier

\section{The complexity of finding short group Steiner paths}
\label{app:complexity_gsp}

In this section we discuss the complexity of finding short group Steiner paths and provide both negative results and a FPT algorithm w.r.t.\ the number $k$ of groups. 

\subsection{Hardness and Inapproximability}

\subsubsection{General group sizes} Unsurprisingly, the problem of finding shortest group Steiner paths inherits strong inapproximability results from the closely related group Steiner tree problem, where the goal is that of computing a minimum-weight tree in $G$ that contains at least one vertex for each group. 

Indeed, one can observe that any group Steiner path $\pi$ w.r.t.\ the groups in $\mathcal{R}$ implies the existence of a group Steiner tree w.r.t.\ $\mathcal{R}$ of weight at most $w(\pi)$ which can be found by computing any spanning tree of the subgraph of $G$ induced by the edges in $\pi$.
Conversely, a group Steiner tree $T$ of $G$ w.r.t.\ $\mathcal{R} \cup \{ \{s\}, \{t\} \}$ implies the existence of a group Steiner path between $s$ and $t$ of $G$ w.r.t.\ $\mathcal{R}$ of length at most $2w(T)$ which is a portion of a suitable Eulerian tour of $T$. More precisely, all edges of $T$ except those in the unique path $P$ from $s$ to $t$ are traversed twice, while those in $P$ are traversed only once.
As a consequence, the group Steiner path and the group Steiner tree problems have the same asymptotic (in-)approximability.

Halperin and Krauthgamer~\cite{halperinSTOC03} showed that, given any constant $\varepsilon>0$, there exists no polynomial-time $O(\log^{2-\varepsilon} k)$-approximation algorithm to compute a minimum-weight group Steiner tree, unless $\NP \subseteq \mathsf{ZPTIME}(n^{\text{polylog}(n)})$, and the above holds even if the input graphs are trees. As discussed above, this translates into the same inapproximability result for the problem of computing a shortest group Steiner path. Additionally, the authors of~\cite{halperinSTOC03} prove their result by showing a gap-producing reduction from SAT, which implies that the above hardness carries over to the problem of computing group Steiner distances.

\subsubsection{The singleton case}\label{apx:steiner_path_singleton_case}

We now discuss the relation between the problem of computing a shortest group Steiner path in the singleton case and the minimum-cost metric Hamiltonian path problem. 
In this problem we are given a complete edge-weighted graph on $\eta$ vertices which satisfies the triangle inequality, along with two distinguished vertices $s,t$, and the goal is that of finding a Hamiltonian path having $s$ and $t$ as endvertices and minimum total cost (measured as the sum of the weights of the edges in the path). 

The minimum-cost metric  Hamiltonian path problem is known to be not approximable in polynomial time within a factor of $\frac{185}{184}$, unless $\p=\NP$ \cite{KarpinskiLS15}, while it admits a polynomial-time $\frac{3}{2}$-approximation algorithm \cite{Zenklusen19}.

It turns out that the same (in)approximability results also hold for the problem of computing shortest group Steiner paths in the singleton case. 
Indeed, any shortest group Steiner path $\pi$ between $s$ and $t$ is associated with a minimum-cost Hamiltonian path $\widetilde{\pi}$ in the metric complete graph $\widetilde{G}$ having $R \cup \{s,t\}$ as its vertex set and in which each edge $(u,v)$ has weight $d_G(u,v)$.
The edges in $\widetilde{G}$ correspond to shortest paths between their endpoints in $G$, and vice-versa. Conversely, a minimum-cost Hamiltonian path $\widetilde{\pi}$ from $s$ to $t$ in an input graph $\widetilde{G}$ can be found by computing a shortest group Steiner paths $\pi$ from $s$ to $t$ in $\widetilde{G}$ when each vertex is in its own group $R_i$.
In both cases the cost of $\widetilde{\pi}$ is equal to the length of $\pi$.

\subsection{An algorithm with a running time of \texorpdfstring{$2^k k n^{O(1)}$}{2\^k k n\^O(1)}}
\label{sec:computing_group_Steiner_paths}

In this section we show how the group Steiner distances from $s$ to all $v \in V$ can be computed in time $O(2^k k |R|^2 + |R| (m + n \log n)) = 2^k k n^{O(1)}$. After running our algorithm, each shortest group Steiner path from $s$ to any $t \in V$ can be returned in time proportional to the path's length.

We start by computing all the shortest paths (w.r.t.\ the shortest path metric) from the vertices $u \in R \cup \{s\}$ towards all $v \in V$, along with the distances $d_G(u,v)$.
This can be done by a repeated invocation of Dijkstra's algorithm and requires an overall time of $O( |R| (m + n \log n) )$.

We now compute the length $\BD(s,r \mid \mathcal{C})$ of the shortest group Steiner path from $s$ to $r$ w.r.t.\ the groups in $\mathcal{C}$, for all $r \in R$ and all $\mathcal{C} \subseteq \mathcal{R}$.

According to our definition, we have $\BD(s, r \mid \emptyset) = d_G(s,r)$.
Using \Cref{lemma:group_steiner_path_decomposition}, we obtain the following recursive formula  for $\mathcal{C} \neq \emptyset$:
\begin{equation}
    \label{eq:bd_from_s_to_r}
    \BD(s,r \mid \mathcal{C}) = \min_{r' \in \cup_{R_i \in \mathcal{C}} R_i} \BD(s,r' \mid \mathcal{C} \setminus \{ R_j \mid r' \in R_j\}) + d_G(r', r).
\end{equation}

Since the above formula can be evaluated in time $O(k |R|)$ (once $\BD(s,r \mid \mathcal{C}')$ is known for all $r \in R$ and $\mathcal{C}' \subset \mathcal{C}$), this immediately leads to a dynamic programming algorithm that computes $\BD(s,r \mid \mathcal{C})$ for all choices of $r \in R$ and $\mathcal{C} \subseteq \mathcal{R}$ in time $O( |R| (m + n \log n)+  2^k \cdot k  \cdot |R|^2 )$.

The group Steiner distance $\BD(s,v \mid \mathcal{R})$ from $s$ to a vertex $v \in V$ can then be found in time $O(|R|)$ by guessing the last required vertex $r$ in a shortest group Steiner path from $s$ to $v$, i.e.:
\begin{equation}
    \label{eq:bd_from_s_to_v}
    \BD(s,v \mid \mathcal{R}) = \min_{r \in R} \BD(s,r \mid \mathcal{R}) + d_G(r, v). 
\end{equation}

Therefore we can compute all group Steiner distances $\BD(s,v \mid \mathcal{R})$ for $v \in V$ in time $O(|R|n)$ with is dominated by the $O( |R| (m + n \log n)+  2^k \cdot k  \cdot |R|^2 )$ time complexity discussed above.

To return a shortest group Steiner path $\pi_v$ from $s$ to $v$ it suffices to additionally keep track of (i) the optimal choice of $r$ in \Cref{eq:bd_from_s_to_v} for each $v$, and (ii) the optimal choices of $r'$ in \Cref{eq:bd_from_s_to_r} for all subproblems encountered during the dynamic programming algorithm.
This allows to find at most $k$ required nodes in $\pi_v$ (in order) such that each $R_i$ contains at least one such node. By \Cref{lemma:group_steiner_path_decomposition} the portions of $\pi_v$ between any two consecutive required nodes are shortest paths in $G$, and all such shortest paths have been computed at the beginning of the algorithm.

\end{document}